\tikzstyle{empty} = [fill,circle,inner sep=0mm,minimum size=0pt,line width=0mm]
\tikzstyle{dot} = [fill,circle,inner sep=0mm,minimum size=3pt,line width=0mm]
\tikzstyle{null} = [circle,inner sep=0mm,minimum size=0pt,line width=0mm]
\tikzstyle{state} = [draw,fill=white,rectangle,rounded corners,minimum size=4mm,inner sep=1pt,thick]
\tikzstyle{split} = [draw,circle, fill=black,scale=0.4]
\pgfplotsset{compat=1.18}
\pgfplotsset{
    discard if not/.style 2 args={
        x filter/.code={
            \edef\tempa{\thisrow{#1}}
            \edef\tempb{#2}
            \ifx\tempa\tempb
            \else
                
            \fi
        }
    }
}
\def\orcidID#1{\textsuperscript{\,\smash{\protect\raisebox{-1.25pt}{\href{http://orcid.org/#1}{\protect\includegraphics[scale=.8]{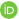}}}}}}
\g@addto@macro\normalsize{%
  \setlength\abovedisplayskip{3pt}%
  \setlength\belowdisplayskip{3pt}%
  \setlength\abovedisplayshortskip{-3pt}%
  \setlength\belowdisplayshortskip{3pt}%
}%
\Crefname{figure}{Fig.}{Figs.}
\crefname{figure}{fig.}{figs.}
\Crefname{tabular}{Tab.}{Tabs.}
\crefname{tabular}{tab.}{tabs.}
\Crefname{section}{Sect.}{Sects.}
\crefname{section}{sect.}{sects.}
\Crefname{appendix}{App.}{Apps.}
\crefname{appendix}{app.}{apps.}
\Crefname{equation}{Eq.}{Eqs.}
\crefname{equation}{eq.}{eqs.}
\Crefname{example}{Ex.}{Exs.}
\crefname{example}{ex.}{exs.}
\Crefname{definition}{Def.}{Defs.}
\crefname{definition}{def.}{defs.}
\Crefname{algorithm}{Alg.}{Algs.}
\crefname{algorithm}{alg.}{algs.}
\crefname{algocf}{alg.}{algs.}
\Crefname{algocf}{Alg.}{Algs.}
\newcommand{\paragraphskip}{\vspace{12pt plus 4pt minus 4pt}} %
\renewcommand{\emptyset}{\varnothing}
\newcommand{\RR}{\ensuremath{\mathbb{R}}\xspace}  %
\newcommand{\tool}[1]{\textsc{#1}}
\newcommand{\toolset}{\tool{Modest Toolset}\xspace}
\newcommand{\modes}{\tool{modes}\xspace}
\newcommand{\mcsta}{\tool{mcsta}\xspace}
\newcommand{\eg}{e.g.\ }
\newcommand{\ie}{i.e.\ }
\newcommand{\wrt}{w.r.t.\xspace}
\newcommand{\set}[1]{\ensuremath{\{\,#1\,\}}}
\newcommand{\tuple}[1]{\ensuremath{\langle #1 \rangle}}
\newcommand{\powerset}[1]{\ensuremath{2^{#1}}\xspace}
\newcommand{\defeq}{\mathrel{\vbox{\offinterlineskip\ialign{\hfil##\hfil\cr{\tiny \rm def}\cr\noalign{\kern0.30ex}$=$\cr}}}}
\newcommand{\selcandidates}{\mathsf{select\_candidates}}
\newcommand{\est}[2]{\hat{x}_{#1}^{#2}}
\newcommand{\precision}[2]{\varepsilon_{#1}^{#2}}
\newcommand{\Ach}[1]{\ensuremath{\mathit{Ach}(#1)}}
\begin{document}

\title{%
Multi-Objective Statistical Model Checking\\ using Lightweight Strategy Sampling%
\thanks{
Authors are sorted in alphabetical order.
This work was supported
by the EU's H2020 R\,\&\,I programme under MSCA grant agreement 101008233 (MISSION),
by the Interreg North Sea project STORM\_SAFE,
by SeCyT-UNC grant 33620230100384CB (MECANO),
and
by NWO VIDI grant VI.Vidi.223.110 (TruSTy).
}\\ (extended version)
}
\titlerunning{Multi-Objective Statistical Model Checking using LSS}

\author{%
Pedro R.\ D'Argenio\inst{1}\orcidID{0000-0002-8528-9215}
\and
Arnd Hartmanns\inst{2}\orcidID{0000-0003-3268-8674}
\and\\
Patrick Wienh\"oft\inst{3,4}\orcidID{0000-0001-8047-4094}
\and
Mark van Wijk\inst{2}$^{\text{\,(\raisebox{-1.6pt}{\Envelope})}}$\orcidID{0009-0001-8239-3164}
}
\authorrunning{P.\ R.\ D'Argenio, A.\ Hartmanns, P.\ Wienh\"oft, M.\ van Wijk}
\institute{%
Universidad Nacional de Córdoba and CONICET, Córdoba, Argentina
\and
University of Twente, Enschede, The Netherlands
$\cdot$ \email{mark.vanwijk@utwente.nl}
\and
Dresden University of Technology, Dresden, Germany
\and
Centre for Tactile Internet with Human-in-the-Loop (CeTI), Dresden, Germany
}

\maketitle

\begin{abstract}
Statistical model checking delivers quantitative verification results with statistical guarantees. %
It scales to model sizes and model types that are out of reach for exhaustive, analytical techniques.
So far, it has been used to evaluate one property value at a time only.
Many practical problems, however, require finding the Pareto front of optimal tradeoffs between multiple objectives.
In this paper, we present the first statistical model checking approach for such multi-objective Pareto queries, based on lightweight strategy sampling. %
We introduce an incremental scheme that almost surely converges to a statistically sound confidence band around the true Pareto front in the long run.
To obtain a close underapproximation of the true front in finite time, we propose three heuristic approaches that try to make the best of an a-priori fixed sampling budget.
We implement our new techniques in the \modes simulator, and show their effectiveness on benchmarks from the literature.
\end{abstract}

\section{Introduction}
\label{sec:Intro}

Statistical model checking (SMC)~\cite{AP18,YS02,LLTYSG19} is %
a popular alternative to probabilistic model checking (PMC)~\cite{BAFK18,Bai16} for applications where the latter runs into the state space explosion problem due to its exhaustive nature, or where no effective PMC algorithm is available for the model class at hand. %
SMC circumvents these limitations by applying Monte Carlo simulation: %
Given an effectively executable probabilistic model~$M$ %
and property of interest~$\phi$, it performs random executions (or \emph{simulation runs}) of $M$ to obtain samples for the satisfaction or numeric value of~$\phi$.
Given a desired confidence level, SMC then computes a confidence interval around the satisfaction probability or expected value of~$\phi$,
providing a statistical correctness guarantee.
The core types of properties handled by PMC and SMC are reachability probabilities and expected accumulated rewards~\cite{HJQW23,BHMWW25a}, and both can be extended to compute probabilities for $\omega$-regular objectives in \eg LTL~\cite{Pnu77}~\cite{Kat16,ADKW20}.
PMC and more recently SMC~\cite{BHMWW25b} have also been extended to other quantities like quantiles~\cite{UB13,KSBD15} and conditional value at risk~\cite{KM18}. %

In practice, we often face tradeoffs between multiple properties, such as maximising a satellite's expected utility without %
depleting its battery~\cite{BGHK+19}, or balancing cost and comfort in heat-pump control~\cite{HJLS23}.
That is, for a probabilistic model with controllable nondeterministic choices, we want to choose among the set of \emph{Pareto-optimal} control strategies:
those where no change can be made that improves the value for at least one property without worsening any other.
While PMC algorithms for such \emph{multi-objective} problems are more than a decade old~\cite{FKNPQ11,FKP12} and cover various models~\cite{Qua23} and properties~\cite{RRS17}, we are not aware of any work that brings the advantages of SMC to the multi-objective setting.

In this paper, we present the \textbf{first multi-objective SMC approach} for Pareto queries.
Its \textbf{memory usage is constant} in the size of the model's state space, preserving the distinguishing advantage of SMC.
It uses lightweight strategy sampling (LSS)~\cite{LST14} to evaluate many randomly chosen control strategies \wrt all objectives at hand, and from the resulting data constructs a \textbf{statistically sound} approximation of the (unknown) true Pareto front.

We first propose an incremental scheme that keeps sampling batches of new strategies, continuously refining an under- and overapproximation of the Pareto front.
Assuming ideal LSS and taking care to properly handle statistical error accumulation, the two approximations almost surely converge to a \emph{simultaneous confidence band} for the true Pareto front \emph{in the long run}.
Second, we introduce three fixed-budget algorithms aiming to find the best approximation of the Pareto front in \emph{finite time}. %
They deliver statistically sound \emph{simultaneous lower bounds}, %
since obtaining guarantees on upper bounds is impossible in finite time.

We implement our approach in the \toolset~\cite{HH14}'s \modes simulator~\cite{BDHS20} to experimentally evaluate the effectiveness of SMC-based multi-objective model checking and compare the three fixed-budget algorithms, using benchmarks from the quantitative verification and reinforcement learning (RL)~\cite{SB18} literature. %
Our SMC approaches deliver nontrivial fronts for instances too large (in terms of state space size and dimensionality of the multi-objective query) to be checked by \tool{Storm}'s~\cite{HJKQV22} multi-objective PMC engine.
We compare the quality of our approximation with the Pareto fronts computed by \tool{Storm} where possible.
We use Markov decision process (MDP)~\cite{How60,Bel57,Put94} models in this paper, but our work directly extends to any model class supported by LSS, \eg Markov automata~\cite{EHZ10,BDH24} and probabilistic timed automata~\cite{KNSS02,DHLS16,HSD17}.

\paragraph{Related work.}
Multi-objective PMC uses linear programming~\cite{FKNPQ11,EKVY08} or value iteration~\cite{FKP12,Qua23}.
It is implemented in \tool{Storm}, \tool{Prism}~\cite{KNP11}, and \tool{ePMC}~\cite{FHLSSTZ22}; see \cite[Section~6]{ABBC+24} for a summary of the state of the art.
While PMC can provide hard (non-statistical) guarantees, it requires exploration of the model's state space.
SMC, in contrast, samples instead of exploring, avoiding the state space explosion problem and easily handling models with billions of states~\cite{LL16}.

\tool{Uppaal Smc}~\cite{DLLMP15} %
can compare the values of two properties via SMC on a fully stochastic model.
\tool{Uppaal Stratego}~\cite{DJLMT15} brings together SMC and nondeterminism:
It first finds a \emph{most permissive} strategy for a set of hard constraints via real-time model checking~\cite{BCDF+07}, %
then among the choices allowed by this strategy, uses RL to optimise a ``soft'' quantity~\cite{DJLL+14}, %
which is finally evaluated via SMC.
\tool{Stratego} thus solves a \emph{lexicographic objectives} problem~\cite{WZM15,HPSSTW21} and performs one-dimensional quantitative optimisation only.
We, in contrast, determine the trade-offs between multiple equal-priority probabilistic properties.
We do so in constant memory, while both \tool{Stratego}'s initial model checking and RL store per-state information, potentially limiting scalability.
The latter equally applies to other RL-based multi-objective analysis
approaches~\cite{ACOD16,HRBK+22}.

The general field of \emph{multi-objective optimisation}~\cite{SK06,NRR14} uses many different techniques such as evolutionary algorithms~\cite{DAPM02,SLN02,VMTD15}, particle swarm optimisation~\cite{PV02}, or ant colony optimisation~\cite{AMAD+25}.
These are typically applied to functions of real numbers but can be extended to MDPs~\cite{WCL19,ZLXYW22}.
Like our methods, in finite time, these approaches provide lower bounds only.
They however require more than constant memory and typically do not offer statistical guarantees.

\section{Background}
\label{sec:Background}

\begin{definition}
\label{def:MDP}
A \emph{Markov decision process} (MDP) is a tuple
$\mathcal{M} = \tuple{\mathcal{S}, s_{\mathit{init}}, \mathcal{A}, \delta}$
consisting of finite sets of states $\mathcal{S}$ (its state space) and actions $\mathcal{A}$,
an initial state $s_{\mathit{init}} \in \mathcal{S}$,
and
a partial transition probability function $\delta \colon \mathcal{S} \times \mathcal{A} \rightharpoonup (\mathcal{S} \to [0, 1])$.
Define $\mathcal{A}(s)$ as the set of all actions for which $\delta(s, a)$ is defined.
For all $s \in \mathcal{S}$, we require $|\mathcal{A}(s)| \geq 1$ and $\forall a \in \mathcal{A}(s)\colon \!\sum_{s' \in \mathcal{S}}\delta(s, a)(s') = 1$.
A \emph{reward structure} for an MDP is a function $\mathcal{R}\colon \mathcal{S} \times \mathcal{A} \times \mathcal{S} \to \RR$.
\end{definition}
A reward structure provides a \emph{reward} when moving from one state to another.
We call $\delta(s, a)$ for $a \in \mathcal{A}(s)$ a \emph{transition}; its \emph{branches} are $\set{ s' \mid \delta(s, a)(s') > 0 }$.

\begin{example}
MDP $M_R$ shown in \Cref{fig:ExampleMDP} %
models the process of writing a paper (action $\mathit{write}$) or not ($\mathit{stop}$), and if successfully written (probability $0.85$), choosing to $\mathit{subm}$it it to a conference or to $\mathit{arch}$ive it on arXiv.
The conference's acceptance rate is $20\,\%$; if rejected, we can try again the next year or go to arXiv.
Reward structure $\mathcal{R}_\mathit{rec}$ represents the recognition our results get, and $\mathcal{R}_\mathit{eff}$ our effort.
We annotate branch $s'$ of $\delta(s, a)$ with $(\mathcal{R}_\mathit{rec}(s, a, s') \,{/}\, \mathcal{R}_\mathit{eff}(s, a, s'))$, omitting $(0/0)$s and writing rewards that are the same for all branches on the transition instead.
\end{example}
A path is a sequence $s_0\, a_0\, \ldots \in (\mathcal{S} \times \mathcal{A})^\omega$ s.t.\ $\forall i\colon a_i \in \mathcal{A}(s_i) \wedge \delta(s, a)(s_{i+1}) > 0$.
In an MDP, the choice of action in each state is nondeterministic.
A \emph{probabilistic memoryless strategy} $\sigma \colon \mathcal{S} \rightarrow (\mathcal{A} \to [0,1])$ with $\sum_{a \in \mathcal{A}(s)}\sigma(s)(a) = 1$ makes this choice in all states~$s$.
If $\forall s\,\exists\,a\colon \sigma(s)(a) = 1$, the strategy is \textit{deterministic memoryless} (DM).
When removing all state-actions not chosen by $\sigma$, %
we get the induced discrete-time Markov chain $M|_\sigma$,
on which we can construct probability measures $\mathbb{P}_\sigma^{M,s}$ on the measurable sets of paths starting in~$s$~\cite[Sect.~2.2]{FKNP11}. %

\begin{figure}[t]
\centering
  \subfloat[Example MDP $M_R$]{
    \centering
    \begin{tikzpicture}
      \node[state] (i) {\strut$\,\mathit{init}\,$};
      \node[state] (p) [above right=0.2 and 0.75 of i] {\strut$\,\mathit{paper}\,$};
      \node[state] (f) [below=1.7 of p] {\strut$\,\mathit{done}\,$};

      \node (init) [above=0.3 of i] {};
      \draw[->] (init) to (i);

      \node[split] (is1) [below left=0.6 and -0.2 of p] {};
      \draw[-,shorten <=-2pt] (i) to[bend right=25] node[above,inner sep=4pt] {$\mathit{write}$} node[below,inner sep=2pt] {\scriptsize(0/\textbf{\texttt{+}10})} (is1);
      \draw[->,bend right=15] (is1) to node[pos=0.75,left,inner sep=4pt] {\scriptsize$0.85$} (p);
      \draw[->,bend left=15] (is1) to node[pos=0.75,left,inner sep=4pt] {\scriptsize$0.15$} (f);

      \draw[->] (i) to[out=-90,in=180] node[pos=0.225,left,align=center,overlay] {$\mathit{stop}$} node[pos=0.5,split] {} node[pos=0.85,below] {\scriptsize$1$} (f);

      \node[split] (ps2) [right=0.9 of p] {};
      \draw[-] (p) to node[below,inner sep=2pt] {$\mathit{subm}$} (ps2);
      \draw[->,shorten >=-1.5pt] (ps2) to[out=-90,in=30] node[pos=0.875,right,align=left,inner sep=5pt,overlay] {\scriptsize$~~\,0.2$\\[-3pt]\scriptsize(\textbf{\texttt{+}4}/\\[-4pt]\scriptsize\textbf{~\texttt{+}24})} (f);
      \draw[->] (ps2) to[out=90,in=70,looseness=1.15] node[pos=0.567,below,inner sep=2.5pt] {\scriptsize$0.8$} node[pos=0.567,above,inner sep=0.5pt] {\scriptsize(0/\textbf{\texttt{+}24})} (p);

      \draw[->] (p) to[bend left=20] node[right,pos=0.33,inner sep=1.5pt,align=left] {$\mathit{arch}$\\[-2pt]\scriptsize\,(\textbf{\texttt{+}1}/0)} node[pos=0.65, split] {} node[right,pos=0.83] {\scriptsize$1$} (f);

      \draw[->] (f) to[loop,out=-60,in=-120,looseness=4] node[pos=0.15, right] {$\tau$}  node[split, pos=0.5, yshift=0.75pt] {} node[left, pos=0.85] {\scriptsize$1$} (f);
    \end{tikzpicture}
    \label{fig:ExampleMDP}
  }
  \subfloat[The true Pareto front]{
    \begin{tikzpicture}[scale=0.6]
      \begin{axis}[
        width=7.1cm,
        height=6.25cm,
        xlabel={maximise $\mathcal{R}_\mathit{rec}$~$\bm\rightarrow$},
        ylabel={$\bm\leftarrow$~minimise $\mathcal{R}_\mathit{eff}$},
        ylabel shift=-6pt,
        legend pos=north west,
        legend cell align={left},
        ymin=0,%
        xmin=0,%
        ymax=120,
        xmax=3.67,
        xtick={0,...,3}
        ]
        \addplot[thick,blue!40!white] (0, 0) -- (0,0);
        \addplot[thick,red!40!white] (0, 0) -- (0,0);
        \addplot[thick,black] (0, 0) -- (0, 0);
        \legend{achievable, unachievable, Pareto front}
        \addplot[blue!40!white, fill=blue!40!white,nearly transparent] (0, 0) -- (0.85, 10) -- (3.4, 112) -- (3.4,120) -- (-1, 120) -- (-1, 0) -- cycle;
        \addplot[red!40!white, fill=red!40!white,nearly transparent] (-1, 0) -- (0, 0) -- (0.85, 10) -- (3.4, 112) -- (3.4, 120) -- (4, 120) -- (4, -10) -- (-1, -10) -- cycle;
        \addplot[thick] (0, 0) -- (0.85, 10) -- (3.4, 112);
        \draw (0, 0) node [above right,align=left,yshift=2pt] {\!\scriptsize(1)};
        \draw (0.85, 10) node [above left,inner sep=1pt,xshift=1pt,yshift=1pt] {\scriptsize(2)};
        \draw (3.4, 112) node [above left,inner sep=1pt,xshift=-1pt,yshift=-2pt] {\scriptsize(3)};
      \end{axis}
    \end{tikzpicture}
    \label{fig:ExampleParetoCurve}
  }
  \subfloat[SMC approximations]{
    \begin{tikzpicture}[scale=0.6]
      \begin{axis}[
        width=7.1cm,
        height=6.25cm,
        xlabel={maximise $\mathcal{R}_\mathit{rec}$~$\bm\rightarrow$},
        ylabel={$\bm\leftarrow$~minimise $\mathcal{R}_\mathit{eff}$},
        ylabel shift=-6pt,
        legend pos=north west,
        legend cell align={left},
        ymin=0,%
        xmin=0,%
        ymax=120,
        xmax=3.67,
        xtick={0,...,3}
        ]

        \addplot[thick,blue!40!white] (0, 0) -- (0,0);
        \addplot[thick,red!40!white] (0, 0) -- (0,0);
        \addplot[thick] (0, 0) -- (0,0);
        \legend{underapprox., ``overapprox.''\phantom{\!\!\!d}, Pareto front}

        \addplot[fill=blue!40!white,nearly transparent] (0, 20) -- (1, 40) -- (2, 80) -- (2.5,110) -- (2.5,120) -- (-1, 120) -- (-1, 20) -- cycle;
        \addplot[thick,blue!80!white,nearly transparent] (-1, 20) -- (0, 20) -- (1, 40) -- (2, 80) -- (2.5,110) -- (2.5,120);
        \addplot[fill=red!40!white,nearly transparent] (-1, 10) -- (0, 10) -- (2.8, 70) -- (3.5, 100) -- (3.5, 120) -- (3.7, 120) -- (4, 120) -- (4, -10) -- (-1, -10) -- cycle;
        \addplot[thick,red!80!white,nearly transparent] (-1, 10) -- (0, 10) -- (2.8, 70) -- (3.5, 100) -- (3.5, 120) -- (3.7, 120);
        \addplot[thick] (0, 0) -- (0.85, 10) -- (3.4, 112);
      \end{axis}
    \end{tikzpicture}
    \label{fig:ExampleCurveApproximation}
  }
  \caption{An example MDP and the corresponding expected-reward Pareto fronts}
  \label{fig:ExampleMDPAndCurves}
\end{figure}

\paragraph{Single-objective properties.}
The optimal \emph{reachability probabilities} of a set of goal states $G \subset \mathcal{S}$ are
$\mathrm{P}_\mathrm{\!min}(\diamond\, G) \defeq \textstyle\inf_{\sigma} \mathrm{P}^\sigma(\diamond\, G)$
and
$\mathrm{P}_\mathrm{\!max}(\diamond\, G) \defeq \textstyle\sup_{\sigma} \mathrm{P}^\sigma(\diamond\, G)$
where
$\mathrm{P}^\sigma(\diamond\, G) \defeq \mathbb{P}_\sigma^{M,s_\mathit{init}}(\Pi_G)$
with $\Pi_G$ the set of paths that contain a state in $G$.
The optimal \emph{expected rewards} to reach a goal are
$\mathrm{E}_\mathrm{min}^\mathcal{R}(\diamond\, G) \defeq \textstyle\inf_{\sigma} \mathrm{E}^{\mathcal{R},\sigma}(\diamond\, G)$
and
$\mathrm{E}_\mathrm{max}^\mathcal{R}(\diamond\, G) \defeq \textstyle\sup_{\sigma} \mathrm{E}^{\mathcal{R},\sigma}(\diamond\, G)$
where
$\mathrm{E}^{\mathcal{R},\sigma}(\diamond\, G) \defeq \mathbb{E}_\sigma^{M,s_\mathit{init}}(R^\mathcal{R}_G)$,
$\mathbb{E}_\sigma^{M,s_\mathit{init}}(r)$ is the expectation of random variable $r$ under $\mathbb{P}_\sigma^{M,s_\mathit{init}}$, and $R^\mathcal{R}_G$ maps each path to the sum of the rewards incurred up to the first state in~$G$, or to $\infty$ if the path does not contain such a state.
Note that we consider undiscounted \emph{reachability rewards}, not total-reward or discounted properties.
We restrict to \emph{contracting} MDPs, \ie MDPs that do not contain nontrivial end components.
This is common in the literature~\cite{HM14,BKLPW17},
suffices for simulation runs to almost surely terminate---%
otherwise, we would need more powerful checks %
that require more than constant memory~\cite{ADKW20}---%
and avoids semantic issues which cause current multi-objective \emph{PMC} tools to return inconsistent results on non-contracting MDPs~\cite{HQW26}.
In this setting, for the properties we consider, DM strategies are optimal, even if the model has both positive and negative rewards~\cite[p.~164]{BKLPW17}.
For a property $\phi = \mathrm{P}_\mathit{\!opt}(\diamond\, G)$ or $\phi = \mathrm{E}_\mathit{opt}^\mathcal{R}(\diamond\, G)$, we write $\phi(\sigma)$ for $\mathrm{P}^\sigma(\diamond\, G)$ or $\mathrm{E}^{\mathcal{R},\sigma}(\diamond\, G)$, respectively.

\paragraph{Multi-objective properties.}
Given  $\phi_1, \ldots, \phi_d$, the \emph{multi-objective property} $\phi = \mathrm{multi}(\phi_1, \ldots, \phi_d)$ asks %
for the Pareto front over all the properties.
The \textit{achievable set} $\Ach{\phi}$ contains all points $\tuple{v_1, \ldots, v_d} \in \RR^d$ for~which
$\exists\,\sigma\colon \phi_1(\sigma) \sim_1 v_1 \wedge \ldots \wedge \phi_d(\sigma) \sim_d v_d$
where ${\sim_i} = {\leq}$ if $\phi_i$ is a ${\cdot}_\mathrm{max}$ property and ${\sim_i} = {\geq}$ else.
A point $v = \tuple{v_1, \ldots, v_d} \in \Ach{\phi}$ is \emph{Pareto-optimal} if no \emph{other} point $\tuple{v_1', \ldots, v_d'}\neq v$ with $\forall i\colon v_i \sim_i v_i'$ is achievable;
these points form the \emph{Pareto~front}.

MDP with multi-objective combinations of reachability probabilities and expected rewards can be transformed into MDP with combinations of only total rewards~\cite[Prop.~2]{FKP12}, for which memoryless strategies are Pareto-optimal~\cite[Sect.~3.4]{Qua23}.\label{text:MemorylessSuffices}
The Pareto front is the convex hull of a finite set of corner points resulting from DM strategies; all its other points are convex combinations of the corners resulting from probabilistic strategies.
Note that, for contracting MDPs, \cite{Qua23} also removes \cite{FKP12}'s restriction to non-negative~rewards.%

An \emph{underapproximation} of the achievable set (and thus the Pareto front) is a convex set of achievable but not necessarily Pareto-optimal points; an \emph{overapproximation}, conversely, is a concave set of unachievable or Pareto-optimal~points.

\begin{example}
\Cref{fig:ExampleParetoCurve} shows the achievable set and Pareto front for $M_R$ and %
$
\phi =
\mathrm{multi}(\mathrm{E}_\mathrm{max}^{\mathcal{R}_\mathit{rec}}(\diamond\, \set{ \mathit{done} }), \mathrm{E}_\mathrm{min}^{\mathcal{R}_\mathit{eff}}(\diamond\, \set{ \mathit{done} }))
$, visualising the tradeoff between maximising recognition and minimising effort. %
The corner points correspond to
(1)~not writing a paper,
(2)~writing a paper and, if written successfully, archiving it on arXiv, and
(3)~(re)submitting it to the conference until it is accepted.
\end{example}

\begin{example}
In \Cref{fig:ExampleCurveApproximation}, the blue (top-left) area is an underapproximation of the achievable set of \Cref{fig:ExampleParetoCurve}.
The red (bottom-left) area is neither an under- nor an overapproximation.
Our methods can statistically approximate these two areas.
\end{example}

\paragraph{Statistical model checking.}
Given a fully stochastic executable formal model (\eg a large Markov chain compactly specified in a high-level modelling language), %
a property $\phi$ that can be evaluated to a real value on a finite path, and user-specified number of runs $n$ and confidence level $\gamma \in (0, 1)$, %
we can perform \emph{statistical model checking} (SMC):
Randomly---using a pseudo-random number generator (PRNG) to implement the model's probabilistic choices---perform $n$ simulation runs, \ie generate $n$ random paths, evaluate $\phi$ on each path to obtain samples $X_1, \ldots, X_n$, compute the sample mean $\hat x = \frac{1}{n}\sum_{i=1}^n X_i$ and a confidence interval (CI) $I = [l, u] \ni \hat x$, and return the pair $\tuple{\hat x, I}$.
Various methods exist to compute \emph{sound} CIs~\cite{BHMWW25a} which guarantee that, when repeating the SMC procedure many times, in the limit $(100 \cdot \gamma)\,\%$ or more of the intervals contain the property's (unknown) true value.
The interval's \emph{precision} is $\varepsilon = \frac{1}{2}(u - l)$; for many methods, we can precompute the $n$ needed to achieve a given~$\varepsilon$.

\paragraph{Lightweight strategy sampling.}
To apply SMC to nondeterministic models like MDPs, we need to simulate under a given strategy.
Many SMC tools implicitly use the uniform random strategy.
\emph{Lightweight strategy sampling} (LSS) allows simulating many different strategies using a constant-memory representation for each, \eg a 32-bit integer.
When simulating with such a \emph{strategy identifier} $\sigma$ and encountering a state $s$ with $|\mathcal{A}(s)| = k > 1$, LSS chooses the $(\mathcal{H}(\sigma . s) \mathbin{\mathrm{mod}} k + 1)$-th action (for some fixed ordering of $\mathcal{A}(s)$) where $\mathcal{H}$ is a hash function and $\sigma . s$ is the concatenation of the bitstring representations of $\sigma$ and~$s$.
In this way, LSS %
implements a DM strategy.
By feeding different information into $\mathcal{H}$, LSS can easily provide history-dependent or partially-informed~\cite{DGHS18,DFH20,BDH24} strategies as well.
Good $\mathcal{H}$ ensure that small changes in the input bits, in particular in $\sigma$, have large effects on the output, to cover as much of the strategy space as uniformly as possible.
Thus strategy identifiers are ``opaque'', \ie their values are not connected to their action choices in any systematic way.\!\footnote{%
Given user input/domain knowledge, biasing to certain choices may increase the probability of sampling near-optimal strategies; in an uninformed black-box setting, however, no fixed bias can improve results for all models, so we sample uniformly.}

An MDP's set of DM strategies---its \emph{strategy space}---is vast but finite.
We use 32-bit strategy identifiers, covering up to $2^{32}$ strategies; while a large space to sample from, it will be smaller than the strategy space of many interesting MDPs.
We consider this a \emph{practical} limitation (that could be mitigated by moving to larger identifiers); in the \emph{theoretical} arguments we make in the following sections, we assume \emph{ideal} LSS where by uniformly sampling identifiers we will almost surely eventually encounter every DM strategy.
We use a PRNG $\mathfrak{P}$ to sample strategy identifiers; by initialising $\mathfrak{P}$ with a fixed \emph{strategy seed}, we can repeatedly sample the same sequence of strategies to make our experiments more repeatable.

\section{Multi-Objective SMC}
\label{sec:MOSMC}

We present an incremental scheme %
and three fixed simulation budget algorithms. %
The latter use heuristics to discard seemingly suboptimal strategies. %
Fix an MDP as in \Cref{def:MDP}, $\phi = \mathrm{multi}(\phi_1, \ldots, \phi_d)$, and confidence level~$\gamma = 1 - \alpha$. %
Recall that the Pareto front's corner points arise from DM strategies after the MDP transformation of \cite[Prop.~2]{FKP12}, which essentially adds one Boolean state variable per property $\phi_i$ to track whether its goal set has been reached.

\paragraph{Approximating the Pareto front.}
All our approaches use LSS to sample deterministic strategies with $d$ bits of memory, which track the visited goal sets to implement the transformation of \cite[Prop.~2]{FKP12} on-the-fly, and evaluate them via SMC. %
For each strategy $\sigma$, we thus obtain $d$ pairs $\tuple{\hat{x}^\sigma_i, I^\sigma_i}$ of a sample mean and a CI with precision $\precision{i}{\sigma}$ for each dimension, $i \in \set{1, \ldots, d}$; together, the one-dimensional CIs describe a $d$-dimensional confidence box $B_\sigma$ around the multi-objective sample mean $\hat{x}^\sigma = \tuple{\hat{x}^\sigma_1, \ldots, \hat{x}^\sigma_d}$.
We ensure that the boxes for the evaluated strategies are \emph{all} simultaneously correct with a-priori probability $\gamma$ (requirement~1).
We write $\mathit{SMC}(\sigma, \gamma, \ldots)$ for the invocation of SMC with LSS for strategy identifier $\sigma$ and statistical parameters $\gamma, \ldots$ that returns $\tuple{\hat{x}_\sigma, B_\sigma}$ satisfying requirement~1.
The convex hull of the ``most pessimistic'' corners of all boxes then bounds a candidate underapproximation $\underline{C}$ of the Pareto front; using
the ``most optimistic'' corners gives a candidate overapproximation~$\overline{C}$.
Given a function $\mathit{stat}$ mapping strategy identifiers to $\tuple{\hat{x}^\sigma, B_\sigma}$ pairs, we also write $\underline{C}(\mathit{stat})$ and $\overline{C}(\mathit{stat})$ to refer to the respective convex-hull computations.

\begin{wrapstuff}[type=figure,width=4.25cm]
  \centering
  \begin{tikzpicture}[scale=0.6]
    \begin{axis}[
      width=7.1cm,
      height=6.25cm,
      xlabel={maximise $\mathcal{R}_\mathit{rec}$~$\bm\rightarrow$},
      ylabel={$\bm\leftarrow$~minimise $\mathcal{R}_\mathit{eff}$},
      ylabel shift=-6pt,
      legend pos=north west,
      legend cell align={left},
      ymin=-10,
      xmin=-0.33,
      ymax=120,
      xmax=3.67,
      xtick={0,...,3}
      ]
      \addplot[thick,blue!40!white] (0, 0) -- (0,0);
      \addplot[thick,red!40!white] (0, 0) -- (0,0);
      \addplot[thick] (0, 0) -- (0,0);
      \legend{underapprox., ``overapprox.''\phantom{\!\!\!d}, Pareto front}

      \addplot[fill=blue!40!white,nearly transparent] (0, 20) -- (1, 40) -- (2, 80) -- (2.5,110) -- (2.5,120) -- (-1, 120) -- (-1, 20) -- cycle;
      \addplot[thick,blue!80!white,nearly transparent] (-1, 20) -- (0, 20) -- (1, 40) -- (2, 80) -- (2.5,110) -- (2.5,120);
      \addplot[fill=red!40!white,nearly transparent] (-1, 10) -- (0, 10) -- (2.8, 70) -- (3.5, 100) -- (3.5, 120) -- (3.7, 120) -- (4, 120) -- (4, -10) -- (-1, -10) -- cycle;
      \addplot[thick,red!80!white,nearly transparent] (-1, 10) -- (0, 10) -- (2.8, 70) -- (3.5, 100) -- (3.5, 120) -- (3.7, 120);
      \addplot[thick] (0, 0) -- (0.85, 10) -- (3.4, 112);

      \filldraw[thick,fill=gray!20!white,nearly transparent,draw=gray!80!white] (0, 10) rectangle (-0.5,40);
      \node[gray] (x1) at (-0.25, 25) [dot] {};

      \filldraw[thick,fill=gray!20!white,nearly transparent,draw=gray!80!white] (0, 20) rectangle (0.2,15);
      \node[gray] (x2) at (0.1, 17.5) [dot] {};

      \filldraw[thick,fill=gray!20!white,nearly transparent,draw=gray!80!white] (3.5, 100) rectangle (2.5,110);
      \node[gray] (x3) at (3, 105) [dot] {};

      \filldraw[thick,fill=gray!20!white,nearly transparent,draw=gray!80!white] (0.3, 40) rectangle (0.8,50);
      \node[gray] (x4) at (0.55, 45) [dot] {};

      \filldraw[thick,fill=gray!20!white,nearly transparent,draw=gray!80!white] (1, 40) rectangle (1.2,37);
      \node[gray] (x5) at (1.1,38.5) [dot] {};

      \filldraw[thick,fill=gray!20!white,nearly transparent,draw=gray!80!white] (2.8, 70) rectangle (1.9,80);
      \node[gray] (x6) at (2.35, 75) [dot] {};

    \end{axis}
  \end{tikzpicture}
  \vspace{-0.3cm}
  \caption{CI boxes and fronts}
  \label{fig:ExampleCurveApproximationConstruction}
\end{wrapstuff}

\begin{example}
\hspace{-4.5ex} %
\label{ex:EstimatedCurves}
Assume $M_R'$ is a larger version of MDP $M_R$ with the same Pareto-optimal strategies but also many other non-optimal DM ones.
\Cref{fig:ExampleCurveApproximationConstruction} plots a possible result of randomly sampling and evaluating six strategies for $M_R'$.
We draw the sample means (grey dots) and confidence boxes for each, and take the top-left corners (because we want to maximise on the horizontal axis and minimise on the vertical axis) to form $\underline{C}$, which is an underapproximation of the true Pareto front.
The most optimistic (bottom-right) corners do not characterise a valid overapproximation because we did not sample good strategies on the bottom left.
\end{example}
\begin{lemma}
\label{thm:UnderapproxConf}
$\underline{C}$ is an underapproximation with probability~$\gamma$.
\end{lemma}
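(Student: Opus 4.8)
The plan is to isolate the single source of randomness---the correctness of the confidence boxes---and then show that, once those boxes are correct, $\underline{C}$ being an underapproximation is a purely geometric fact. For an evaluated strategy $\sigma$ write $p_\sigma = \tuple{\phi_1(\sigma), \ldots, \phi_d(\sigma)}$ for the (unknown) true value point it realises, and let $E$ be the event that $p_\sigma \in B_\sigma$ for \emph{every} evaluated~$\sigma$. Only finitely many strategies are ever evaluated, and by construction $\mathit{SMC}$ is invoked so that all of their boxes are simultaneously correct with a-priori probability~$\gamma$ (req.~1); hence $\Pr(E) \geq \gamma$. It therefore suffices to show that $E$ \emph{deterministically} implies that $\underline{C}$ is an underapproximation, for then $\Pr(\underline{C}\text{ is an underapproximation}) \geq \Pr(E) \geq \gamma$.

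For the deterministic implication I would argue in two short steps. First, on $E$ each ``most pessimistic'' corner $c_\sigma$ of $B_\sigma$ is an achievable point: by definition $c_\sigma$ is, in every objective $i$, the corner on the worse side of the entire box, so every point of $B_\sigma$---in particular $p_\sigma$---relates to it via $\sim_i$ in each coordinate; since $\phi_i(\sigma) = (p_\sigma)_i$, the strategy $\sigma$ itself witnesses $\phi_i(\sigma) \sim_i (c_\sigma)_i$ for all $i$, which is exactly the defining condition for $c_\sigma$ to lie in the achievable set. Second, the achievable set is convex (as recalled just before the lemma) and closed under making a point componentwise worse \wrt the $\sim_i$ (immediate from its definition together with transitivity of $\leq$ and $\geq$); consequently it contains the convex hull of $\set{c_\sigma}$ and every point dominated by a point of that hull, which is precisely the region bounded below by $\underline{C}$. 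As $\underline{C}$ is convex by construction as (the worse-closure of) a convex hull, it is a convex set of achievable points, \ie an underapproximation.

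I expect the mathematics here to be light, the care needed being mostly at the level of the statement rather than the calculation. The one point I would make sure to spell out is that ``most pessimistic corner'' is defined exactly so that $p_\sigma \in B_\sigma$ forces $p_\sigma$ to relate to $c_\sigma$ via $\sim_i$ in every coordinate---without that observation the first step is not automatic. It is also worth being explicit that req.~1 already supplies \emph{joint} correctness of all evaluated boxes, so the single event $E$ is all that is needed here and no further union bound enters at this stage (the union-bound / error-splitting work is what \emph{justifies} req.~1, and is orthogonal to this lemma). Finally, since the definition of ``underapproximation'' demands both convexity of $\underline{C}$ and achievability of its points, one should note that the former holds by construction and the latter is the content of the two steps above.
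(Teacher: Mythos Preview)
Your proof is correct and follows essentially the same approach as the paper's proof sketch: both establish that, with probability~$\gamma$, each most pessimistic corner $c_\sigma$ is achievable because the true mean $p_\sigma$ lies in $B_\sigma$ and $c_\sigma$ is componentwise worse than~$p_\sigma$. You phrase the implication directly while the paper argues by contraposition (``if $c_\sigma$ is unachievable then the whole box is, so $p_\sigma \notin B_\sigma$''), and you additionally spell out the convex-hull/worse-closure step that the paper's sketch leaves implicit.
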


\begin{proof}[sketch]
If a most pessimistic corner is unachievable, then its entire box lies in the unachievable region.
As the true means for all sampled strategies are by definition achievable points, the box cannot contain the true mean.
The probability that this happens for any box is $1 - \gamma$ by requirement~1.
\end{proof}

\begin{remark}
\label{rem:NoOverapprox}
For $\overline{C}$, a similar statement is impossible because we may always have missed a Pareto-optimal strategy.
This is a fundamental limitation of a constant-memory sampling-based approach. %
Only with deeper insights into the model (beyond our assumed black-box view) could we say more, \eg if we knew how likely it is to sample a Pareto-optimal strategy---but finding that probability essentially solves the multi-objective model checking problem already.
\end{remark}

\paragraph{Multiple comparisons.}
We perform SMC (1)~for each strategy (2)~for each dimension, but want all confidence \emph{boxes} to be correct \emph{simultaneously} with probability~$\gamma$.
If we build a CI for each strategy and dimension with confidence~$\gamma$, then \emph{at least one} CI is wrong with probability~$\gg 1-\gamma$.
There are two common ways to counteract this \emph{multiple comparisons problem}:
(1)~In case of statistical independence, \eg for the different strategies (since we perform separate simulation runs for each), Šidák correction~\cite{Sid67} can be applied.
That is, when sampling $m$ strategies, request confidence $\gamma_\sigma = \gamma^\frac{1}{m}$ for each strategy.
(2)~Within a call to $\mathit{SMC}(\sigma, \gamma_\sigma, \ldots)$, all dimensions are estimated from the same simulation runs, so their CIs are not statistically independent.
Then Bonferroni correction~\cite{Mil66} resp.\ the union bound can be used, which requires $\sum_{i=1}^k \alpha_i \leq \alpha$ for $k$ CIs.
Thus the error budget for each dimension can be set to \eg $\alpha_\sigma/d$ with $\alpha_\sigma = 1 - \gamma_\sigma$. %
Šidák correction is harder to work with if we want to distribute the error budget unevenly, and only slightly more efficient than Bonferroni/the union bound beyond very small numbers of comparisons~\cite{Abd07}.
We thus use the latter throughout.

\subsection{Incremental Sampling with Convergence}
\label{sec:incremental}

\begin{wrapstuff}[type=figure,width=0.51\textwidth]
\hspace{-1.5mm}
\begin{minipage}[t]{6.75cm}
\vspace{-3.25mm}
\begin{algorithm}[H]
    \Function{$\mathit{IncSamp}(\text{PRNG } \mathfrak{P}, m, \alpha, \varepsilon, f)$}{
        $\mathit{stat} := \emptyset$, $\underline{C} := \emptyset$, $\overline{C} := \emptyset$\;
        \While{not interrupted}{
            \For{$i=1$ to $m$}{
                $\sigma$ := next sample from $\mathfrak{P}$\;
                $\mathit{stat}(\sigma) := \mathit{SMC}(\sigma, 1 - \frac{f \cdot \alpha}{m}, \varepsilon)$\label{alg:incremental:SMC}\;
                $\underline{C} := \underline{C}(\mathit{stat})$, $\overline{C} := \overline{C}(\mathit{stat})$\;
            }
            $\alpha := (1-f) \cdot \alpha$\;
        }
        \Return $\tuple{ \underline{C}, \overline{C} }$\;
    }
    \caption{Incremental Sampling}\label{alg:incremental}
\end{algorithm}
\end{minipage}
\end{wrapstuff}

\noindent
Our incremental scheme indefinitely samples batches of $m$ strategies as shown in \Cref{alg:incremental}, updating $\underline{C}$ and $\overline{C}$ whenever a strategy has been evaluated.
When interrupted, %
it returns the current candidate approximations $\underline{C}$ and $\overline{C}$.
For \Cref{thm:UnderapproxConf} to hold at any interruption point, we need to divide the error budget $\alpha$ over an a priori unknown number of batches.
We use the union bound and a user-specified factor~$f \in (0,1)$:
The $i$-th batch gets error budget $(1 - f)^{i-1} \cdot f \cdot \alpha$.
Each batch's budget is then divided equally among the $m$ strategies.
We parametrise the SMC invocations in line~\ref{alg:incremental:SMC} with precision $\varepsilon$, not the number of runs $n$; in this way, as $\alpha$ decreases from batch to batch, the number of runs performed per strategy increases.
If we fixed $n$ instead, then $\varepsilon$ would increase, resulting in larger and larger boxes for later strategies.

\begin{lemma}
\label{thm:IncrementalApprox}
With probability $\gamma = 1 - \alpha$:
\begin{enumerate}[label={(\arabic*)}]
\item
When interrupted, $\underline{C}$ returned by \Cref{alg:incremental} is an underapproximation.
\item
When not interrupted, $\underline{C}$ and $\overline{C}$ inside \Cref{alg:incremental} almost surely converge to an under- and overapproximation with precision $2\sqrt{d\varepsilon^2}$, respectively.
\end{enumerate}
\end{lemma}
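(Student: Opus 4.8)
The plan is to prove the two claims separately, with claim~(1) following almost immediately from the error-budget bookkeeping and \Cref{thm:UnderapproxConf}, and claim~(2) requiring a Borel--Cantelli-style argument combined with a geometric-convergence argument for the convex hulls.

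For claim~(1): suppose the algorithm is interrupted after completing $k$ full batches (and possibly part of the $(k{+}1)$-st). Each batch $i \in \{1,\dots,k\}$ was allocated error budget $(1-f)^{i-1} f \alpha$, split equally among its $m$ strategies, so each SMC call in batch $i$ uses confidence $1 - \frac{(1-f)^{i-1} f \alpha}{m}$. By the union bound over all SMC calls made so far — including those of the partial batch, which draw from the $(k{+}1)$-st budget slice $(1-f)^k f \alpha$ — the total error probability that \emph{any} confidence box is wrong is at most $\sum_{i=1}^{k} (1-f)^{i-1} f \alpha + (1-f)^k f\alpha \le \sum_{i=0}^{\infty} (1-f)^i f \alpha = \alpha$. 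Hence req.~1 holds for the multiset of strategies evaluated up to the interruption with probability $\ge \gamma$, and \Cref{thm:UnderapproxConf} applied verbatim to this finite collection of strategies gives that $\underline{C}$ is an underapproximation with probability $\ge \gamma$.

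For claim~(2): I would condition on the event $E$ that \emph{all} confidence boxes (over the infinitely many SMC calls) are simultaneously correct; by the same union-bound computation as above, summed over all batches $i = 1, 2, \dots$, we get $\mathbb{P}(E) \ge 1 - \sum_{i=1}^\infty (1-f)^{i-1} f\alpha = 1 - \alpha = \gamma$. On $E$, every true mean $\hat{x}^\sigma_{\mathrm{true}}$ lies in its box $B_\sigma$, which has half-width $\varepsilon$ in each of the $d$ coordinates, so $\|\hat{x}^\sigma_{\mathrm{true}} - c\| \le \sqrt{d\varepsilon^2}$ for both the pessimistic corner $c = \underline{c}_\sigma$ and the optimistic corner $\overline{c}_\sigma$; in fact the pessimistic and optimistic corners are within $\sqrt{d\,(2\varepsilon)^2} = \sqrt{4d\varepsilon^2}$ of each other — I will need to be slightly careful that the claimed precision is $\sqrt{2d\varepsilon^2}$ and check this follows from how ``precision'' of a set is defined relative to the true front (the half-distance between the under- and over-hull, which at a shared supporting strategy is bounded using the $\ell_2$ diameter of the box, giving the stated $\sqrt{2d\varepsilon^2}$ after the appropriate halving). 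Next, the ideal-LSS assumption says that uniform strategy sampling encounters every memoryless deterministic strategy with probability~$1$; since the $d$-dimensional achievable set is the convex hull of the finitely many corner points arising from deterministic strategies (as noted in \Cref{sec:Background}), with probability~$1$ the algorithm eventually samples \emph{all} of these finitely many corner strategies. Once all corner strategies $\sigma_1,\dots,\sigma_N$ have been sampled (say after batch $K$, a finite random index), the convex hull of $\{\underline{c}_{\sigma_j}\}_{j \le N}$ already contains every true corner shifted inward by at most the box half-width, and adding further sampled strategies only enlarges $\underline{C}$ while keeping it inside the true achievable set on $E$; an analogous statement holds for $\overline{C}$ with the optimistic corners and the complement (concave) set. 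Therefore on $E$ the sequences $\underline{C}$ and $\overline{C}$ stabilise, from batch $K$ onward, to convex/concave sets sandwiching the true front within the stated $\sqrt{2d\varepsilon^2}$ precision — i.e.\ they converge (in fact are eventually constant) almost surely on $E$, and $\mathbb{P}(E) \ge \gamma$, which is exactly claim~(2).

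The main obstacle I anticipate is \emph{not} the probability bookkeeping (which is a clean geometric-series union bound) but the geometric argument that a bounded Hausdorff distance between the under-hull and the true front is actually achieved by the construction, with the correct constant $\sqrt{2d\varepsilon^2}$. The subtlety is that the Pareto front is the boundary of a convex set, the under-approximation is the convex hull of perturbed corner points, and one must argue that perturbing each generator of the true hull by $\le \varepsilon$ per coordinate perturbs the hull's boundary by $\le \sqrt{2d\varepsilon^2}$ in the relevant (Pareto-front) direction — including at facets, not just at vertices — and that the optimistic corners give a matching outer bound. I would handle this by a supporting-hyperplane argument: for any weight vector $w$ in the positive orthant, the supported value of $\underline{C}$ in direction $w$ differs from that of the true achievable set by at most $\max_\sigma |w \cdot (\hat x^\sigma_{\mathrm{true}} - \underline{c}_\sigma)| \le \|w\| \sqrt{d\varepsilon^2}$, and combining the under- and over-sided bounds yields the half-width $\sqrt{2d\varepsilon^2}$ after normalising; the remaining care is purely in matching this to the paper's definition of the precision of an approximation.
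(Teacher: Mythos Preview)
Your proposal is correct and follows essentially the same line as the paper's own proof sketch: for (1) the geometric-series union bound over batches gives total error $\le\alpha$ and then \Cref{thm:UnderapproxConf} applies; for (2) ideal LSS almost surely samples every Pareto-optimal (corner) strategy, and on the event that all boxes are correct the hulls of pessimistic/optimistic corners sandwich the true front within the box diagonal. Two small remarks: the paper's sketch stops at ``worst case the true mean sits at the pessimistic corner, distance $\sqrt{2d\varepsilon^2}$ to the optimistic corner'' and does not carry out the supporting-hyperplane argument you outline, so your extra care there is more than the paper provides (and indeed the constant $\sqrt{2d\varepsilon^2}$ is not derived in detail); and your claim that $\underline{C},\overline{C}$ are \emph{eventually constant} is slightly too strong---new strategies keep arriving and can still move the hulls, but monotonicity plus boundedness by the true front (on $E$) gives convergence, which is all that is asserted.
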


\begin{proof}[sketch]
(1)~follows directly from \Cref{thm:UnderapproxConf} and \Cref{alg:incremental}'s construction (\ie the use of the union bound to distribute~$\alpha$).
For (2), observe that, when sampling indefinitely with ideal LSS on finite MDPs, almost surely all Pareto-optimal strategies $\sigma^\mathit{opt}_i$ will be sampled eventually.
At that (unknown) point, with probability $\gamma$ (by \Cref{alg:incremental}'s construction), all of the $B_{\sigma^\mathit{opt}_i}$ contain the strategy's true mean, and thus $\overline{C}$ is an overapproximation.
In the worst case, the true mean is at the most pessimistic corner, so at a ($d$-dimensional Euclidean) distance of $2\sqrt{d\varepsilon^2}$ from the most optimistic corner, or vice-versa.
\end{proof}
Thus, in the limit, we are $\gamma$-confident that the true Pareto front is fully enveloped by $\underline{C}$ and $\overline{C}$:
they form a \emph{simultaneous $\gamma$ confidence band} for the true front.

\begin{wrapstuff}[type=figure,width=0.45\textwidth]
\begin{tikzpicture}[scale=0.6]
    \begin{axis}[
        name=Axis,
        xlabel={Sampled runs},
        ylabel={Sampled strategies},
        legend pos=north west,
        legend cell align=left,
        ymin=0,
        xmin=0,
        xmax=1000000000,
        xticklabel style={/pgf/number format/fixed}
    ]
        \addplot[thick,blue,dotted,mark=none,restrict x to domain=0:1000000000] table[x=a1,y=a2] {data/approximation-settings.dat};
        \addplot[thick,red,dashed,mark=none,restrict x to domain=0:1000000000] table[x=b1,y=b2] {data/approximation-settings.dat};
        \addplot[thick,orange,loosely dashdotted,mark=none,restrict x to domain=0:1000000000] table[x=c1,y=c2] {data/approximation-settings.dat};
        \addplot[mark=none,restrict x to domain=0:1000000000] table[x=d1,y=d2] {data/approximation-settings.dat};
        \legend{$f=0.1;m=100$, $f=0.1;m=1000$, $f=0.5;m=100$, $f=0.5;m=1000$}
    \end{axis}
\end{tikzpicture}
\vspace{-0.2cm}
\caption{Sampled strategies vs. runs}
\label{fig:approximation-settings:middle}
\end{wrapstuff}

\Cref{thm:IncrementalApprox} holds independently of the values chosen for $m$ and $f$.
The error budget assigned to a strategy determines the number of simulation runs needed in $\mathit{SMC}(\sigma, \gamma', \varepsilon)$.
A larger $f$ results in a higher budget for earlier batches, requiring more runs in later batches to achieve the fixed precision~$\varepsilon$.
A larger batch size $m$ distributes each batch's budget over more strategies, lowering the budget for earlier strategies. %
An example of these tradeoffs is visualised in \Cref{fig:approximation-settings:middle}, using the Clopper-Pearson interval~\cite{CP34} and $d = 2, \varepsilon = 0.01, \alpha = 0.1$.
On the horizontal axis, the number of runs performed is a proxy for the runtime needed; completing the evaluation of more strategies (higher values on the vertical axis) with fewer runs is ``better''.
As expected, the combination of high $f = 0.5$ and low $m = 100$ starts fast but is overtaken by more conservative approaches in the long run.
For large batch sizes, however, the influence of dividing $\alpha$ is small in practice.

\subsection{Fixed-Budget Sampling}
\label{sec:fixedbudget}

In practice, %
we only have finite time.
We propose three %
algorithms---\emph{weight vector refinement} (WVR), \emph{fixed iteration budget} (FIB), and \emph{fixed strategy budget} (FSB)---that try to statistically approximate the achievable set as closely as possible in a fixed simulation budget.
They use heuristics to select ``promising'' strategies, performing fewer simulation runs for seemingly suboptimal ones so that promising ones can be estimated more precisely or more strategies can be sampled. %
We describe our heuristics in \Cref{sec:heuristics}; for now, assume one is given as a function $\selcandidates\colon \powerset{\Sigma} \to \powerset{\Sigma}$ that, given a candidate set of strategies, returns a subset of promising strategies.
WVR adapts multi-objective PMC~\cite{FKP12,Qua23} to the SMC setting, while FIB transfers smart sampling~\cite{DLST15} from single-objective to multi-objective SMC.
FSB adds dynamic addition of strategies to FIB.
While many more schemes are imaginable, these are fundamental corner points.

Since only finitely many strategies can be sampled in finite time, %
any $\overline{C}$ we compute may not be an overapproximation (as in \Cref{ex:EstimatedCurves}); and by \Cref{rem:NoOverapprox}, we cannot quantify the chance for this happening either.
The fixed-budget algorithms thus focus on $\underline{C}$, which they guarantee to be an underapproximation with probability $\gamma$ again.
They consist of a heuristic phase that determines a set of candidate strategies, followed by an evaluation phase that returns~$\underline{C}$. %
The latter samples new runs to ensure statistical independence between the phases to avoid bias %
(see\label{ref:ExplanationAvoidBias} \Cref{ap:Explanations} for an intuitive explanation).
\Cref{alg:evaluate}, the evaluation phase, is the same for WVR, FIB, and FSB; the heuristic phases differ, but all perform $I \cdot m \cdot n$ simulation runs (up to rounding) for $I$ a number of internal iterations.
Our implementation then uses $I \cdot m \cdot n$ more runs for the evaluation~phase.

\begin{algorithm}[t]
    \Function{$\mathit{Eval}(\text{strategy set } \Sigma, \text{run budget}\; n, \text{statistical error budget } \alpha)$}{
        $\mathit{stat} := \emptyset$\; %
        \lForEach{$\sigma \in \Sigma$}{%
            $\mathit{stat}(\sigma) := \mathit{SMC}(\sigma, 1 - \frac{\alpha}{|\Sigma|}, \lfloor\frac{n}{|\Sigma|}\rfloor)$%
        }
        \Return $\underline{C}(\mathit{stat})$\;
    }
    \caption{Fixed-budget SMC for multiple objectives, evaluation phase.}\label{alg:evaluate}
\end{algorithm}

\begin{algorithm}[t]
    \Function{$\mathit{WVR}(\text{PRNG } \mathfrak{P}, \text{heuristic}\; \rho, m, n, I, \alpha)$}{
        $\Sigma := \text{next } \lfloor\frac{m}{2}\rfloor \text{ samples from }\mathfrak{P}$, $\mathit{stat} := \emptyset$\;
        \lForEach{$\sigma \in \Sigma$}{%
            $\mathit{stat}(\sigma) := \mathit{SMC}(\sigma, 1 - \frac{\alpha}{m}, n)$
        }
        \For{$i=2$ to $I$}{
            select $w \in \mathbb{R}^d$ in direction of max.\ distance between $\underline{C}(\mathit{stat}), \overline{C}(\mathit{stat})$\label{alg:weightvector:SelectW}\;
            $\Sigma' := \text{select } \lfloor\frac{|\Sigma|}{2}\rfloor \text{ best strategies in }\Sigma\text{ according to } \lambda\,\sigma \text{. } w\cdot \est{}{\sigma}$\;
            \DoWhile(\label{alg:weightvector:SSStart}){$|\Sigma'| > 0$\label{alg:weightvector:SSEnd}}{
                \lForEach{$\sigma \in \Sigma'$}{%
                    $\mathit{stat}(\sigma) := \mathit{SMC}(\sigma, 1 - \frac{\alpha}{m}, n)$
                }
                $\Sigma' := \text{select } \lfloor\frac{|\Sigma'|}{2}\rfloor \text{ best strategies in }\Sigma\text{ according to } \lambda\,\sigma \text{. } w\cdot \est{}{\sigma}$\;
            }
        }

        \Return $\mathit{Eval}(\selcandidates_\rho(\Sigma), I \cdot m \cdot n, \alpha)$\;
    }
    \caption{The weight vector refinement (WVR) algorithm.}\label{alg:weightvector}
\end{algorithm}

\paragraph{Weight vector refinement.}
Listed in \Cref{alg:weightvector}, our WVR algorithm implements the heuristic phase by \emph{linear scalarisation}:
it uses a specified number $I$ of weight vectors to convert the multi-objective problem into $I$ single-objective problems, as is done in multi-objective PMC, RL~\cite{YNIT19,SXWL+22,MDN13}, and optimisation~\cite{QAVO19,Sin87,WHQLL17}.
When determining the next weight vector, we pick the one with the largest distance between $\underline{C}$ and $\overline{C}$ (line~\ref{alg:weightvector:SelectW}) as in~\cite{Qua23}.
For each vector, we apply a round of smart strategy sampling~\cite{DLST15} in lines \ref{alg:weightvector:SSStart}-\ref{alg:weightvector:SSEnd}, with one exception:
The worst-performing strategies are discarded from the local $\Sigma'$ only, %
so that they will (temporarily) not receive more runs under the current weight vector.
Every \textbf{do} loop iteration performs more runs for the remaining strategies to estimate the most promising ones (those with the highest value in the weight vector's direction) most~precisely.

\begin{algorithm}[t]
    \Function{$\mathit{FIB}(\text{PRNG } \mathfrak{P}, \text{heuristic}\; \rho, m, n, I, \alpha)$}{
        $\Sigma_1 := \text{next } m \text{ samples from }\mathfrak{P}$, $\mathit{stat} := \emptyset$\;
        \For{$i=1$ to $I$}{
            \lForEach{$\sigma \in \Sigma_i$}{%
                $\mathit{stat}(\sigma) := \mathit{SMC}(\sigma, 1 - \frac{\alpha}{m}, \lfloor\frac{n\cdot m}{|\Sigma_i|}\rfloor)$\label{alg:fixediterationbudget:PerformSMC}
            }
            $\Sigma_{i+1} = \selcandidates_\rho(\Sigma_i)$\label{alg:fixediterationbudget:Discard}\;
        }
        \Return $\mathit{Eval}(\selcandidates_\rho(\Sigma_{I+1}), I \cdot m \cdot n, \alpha)$\;
    }
    \caption{The fixed iteration budget (FIB) algorithm.}\label{alg:fixediterationbudget}
\end{algorithm}

Linear scalarisation is a necessity in PMC, to turn $\phi$ into single-objective properties for which PMC algorithms are available.
SMC, however, can estimate the values of multiple objectives at once, enabling more direct approaches:

\paragraph{Fixed iteration budget.}
The FIB algorithm shown in \Cref{alg:fixediterationbudget} samples an initial set of strategies $\Sigma_1$, then $I$ times
performs simulation runs for the current strategy set $\Sigma_i$ (line~\ref{alg:fixediterationbudget:PerformSMC}) and
heuristically discards unpromising strategies to form the next set $\Sigma_{i+1}$ (line~\ref{alg:fixediterationbudget:Discard}).
Each iteration evenly distributes $m \cdot n$ runs over the strategies in $\Sigma_i$, so that promising strategies are estimated ever more precisely. %

\begin{algorithm}[t]
    \Function{$\mathit{FSB}(\text{PRNG } \mathfrak{P}, \text{heuristic}\; \rho, m, n, I, \alpha)$}{
        $\Sigma_1 := \varnothing$, $\Sigma'_1 := \text{next } m \text{ samples from }\mathfrak{P}$, $\mathit{stat} := \emptyset$\;
        \For{$i=1$ to $I$}{
            \lForEach{$\sigma \in \Sigma_i$}{%
                $\mathit{stat}(\sigma) := \mathit{SMC}(\sigma, 1 - \frac{\alpha}{m}, n)$
            }
            \lForEach{$\sigma \in \Sigma'_i$}{%
                $\mathit{stat}(\sigma) := \mathit{SMC}(\sigma, 1 - \frac{\alpha}{m}, i \cdot n)$
            }
            $\Sigma_{i+1} := \selcandidates_\rho(\Sigma_i\cup\Sigma'_i)$\;
            $\Sigma'_{i+1} := \text{next } \lfloor\frac{m-|\Sigma_{i+1}|}{i+1}\rfloor \text{ samples from }\mathfrak{P}$\;
        }

        \Return $\mathit{Eval}(\selcandidates_\rho(\Sigma_{I+1}), I \cdot m \cdot n, \alpha)$\;
    }
    \caption{The fixed strategy budget (FSB) algorithm.}\label{alg:fixedstrategybudget}
\end{algorithm}

\paragraph{Fixed strategy budget.}
Instead of redistributing the budget for discarded strategies over the remaining strategies, FSB (shown in \Cref{alg:fixedstrategybudget}) samples new strategies so that, in the end, $I \cdot n$ runs were performed for each ultimately surviving strategy.
For multi-objective properties with many suboptimal strategies, this increases %
the probability of finding a Pareto-optimal strategy.

\subsection{Strategy Selection Heuristics}
\label{sec:heuristics}

\begin{wrapstuff}[type=figure,width=0.33\textwidth,lines=10]
\centering
\begin{tikzpicture}[scale=.33,every node/.style={font=\scriptsize}]
\draw[-] (0,0) -- (10.5,0);
\draw[-] (0,0) -- (0,10.5);
\node at (5.1,-.5) {maximise $x_1 \bm{\rightarrow}$};
\node at (-.5,5.1) {\rotatebox{90}{$\bm{\leftarrow} \text{minimise } x_2$}};

\filldraw[fill=blue!20!white,nearly transparent,draw=blue!80!white] (4.5,1.7) rectangle +(3.8,3.2) ;
\filldraw[fill=red!20!white,nearly transparent,draw=red!80!white] (3.8,2.7) rectangle +(3,4) ;

\node (vf0) at (0,.5) [null] {};
\node (nf0) at (1.5,.5) [dot] {};
\node (vf1) at (1.5,1.2) [null] {};
\node (nf1) at (1.8,1.2) [dot] {};
\node (vf2) at (1.8,2) [null] {};
\node (nf2) at (3.1,2) [dot] {};
\node (vf3) at (3.1,3.3) [null] {};
\node[blue] (nf3) at (6.4,3.3) [dot] {};
\node (vf4) at (6.4,7) [null] {};
\node (nf4) at (8.2,7) [dot] {};
\node (vf5) at (8.2,8.5) [null] {};
\node (nf5) at (9.5,8.5) [dot] {};
\node (vf6) at (9.5,10.2) [null] {};

\node[red] (exc) at (5.3,4.7) [dot] {};

\node at (7.8,2.6) {\textcolor{blue}{$(\est{1}{\varsigma},\est{2}{\varsigma})$}};
\node at (3.9,5.4) {\textcolor{red}{$(\est{1}{\sigma},\est{2}{\sigma})$}};

\draw[red] (5.3,7.5) -- node[above] {$\precision{1}{\sigma}$} (6.8,7.5);
\draw[red] (5.3,7.3) -- (5.3,7.7);
\draw[red] (6.8,7.3) -- (6.8,7.7);
\draw[red] (2.8,4.7) -- node[left] {$\precision{2}{\sigma}$} (2.8,2.7);
\draw[red] (2.6,4.7) -- (3,4.7);
\draw[red] (2.6,2.7) -- (3,2.7);

\draw[blue] (6.4,1.2) -- node[below] {$\precision{1}{\varsigma}$} (4.5,1.2);
\draw[blue] (6.4,1) -- (6.4,1.4);
\draw[blue] (4.5,1) -- (4.5,1.4);
\draw[blue] (9,3.3) -- node[right] {$\precision{2}{\varsigma}$} (9,4.9);
\draw[blue] (8.8,3.3) -- (9.2,3.3);
\draw[blue] (8.8,4.9) -- (9.2,4.9);

\path[-]
(vf0) edge[] node [] {} (nf0)
(nf0) edge[] node [] {} (vf1)
(vf1) edge[] node [] {} (nf1)
(nf1) edge[] node [] {} (vf2)
(vf2) edge[] node [] {} (nf2)
(nf2) edge[] node [] {} (vf3)
(vf3) edge[] node [] {} (nf3)
(nf3) edge[] node [] {} (vf4)
(vf4) edge[] node [] {} (nf4)
(nf4) edge[] node [] {} (vf5)
(vf5) edge[] node [] {} (nf5)
(nf5) edge[] node [] {} (vf6)
;
\end{tikzpicture}
\vspace{-0.3cm}
\caption{The simple rule}
\label{fig:rule:pareto-simple}
\end{wrapstuff}

\noindent
We now describe our strategy selection heuristics, beginning with a simple rule:

\medskip\noindent
\refstepcounter{definition}\label{def:rule-simple}%
\textbf{Definition~\thedefinition.}\enspace
\emph{$\sigma\notin\selcandidates_\mathit{simple}(\Sigma)$ iff there is a witness strategy $\varsigma\in\Sigma$ such that, for all $i \in \set{1, \ldots, d}$,
$\est{i}{\sigma} \leq \est{i}{\varsigma}$
if the $i$-th objective is maximising, and
$\est{i}{\sigma} \geq \est{i}{\varsigma}$
if it is minimising.}

\medskip\noindent
This rule effectively constructs a (non-convex) Pa\-re\-to front from the sample means of the deterministic strategies in $\Sigma$ as illustrated in \Cref{fig:rule:pareto-simple}, discarding all strategies with suboptimal sample means: %
In the figure, %
$\sigma$ will be discarded because witness $\varsigma$ has a better sample mean.

\begin{wrapstuff}[type=figure,l,width=0.475\textwidth,lines=24,top=8]
    \centering
    \scriptsize
    \subfloat[Simple rule]{
        \begin{tikzpicture}[scale=.275]
            \draw[-] (0,0) -- (8.3,0);
            \draw[-] (0,0) -- (0,7.7);
            \node at (4.5,-.5) {$\text{maximise } x_1 \bm{\rightarrow}$};
            \node at (-.5,4) {\rotatebox{90}{$\bm{\leftarrow} \text{minimise } x_2$}};
            \filldraw[fill=blue!20!white,nearly transparent,draw=blue!80!white] (2.8,1) rectangle +(5.2,4);
            \filldraw[fill=red!20!white,nearly transparent,draw=red!80!white] (1.7,2.1) rectangle +(3.2,5);

            \node[blue] (node) at (5.4,3) [dot] {};
            \node[red] (excl) at (3.3,4.6) [dot] {};

            \node at (5.9,2.5) {\textcolor{blue}{$\est{}{\varsigma}$}};
            \node at (2.9,5) {\textcolor{red}{$\est{}{\sigma}$}};

            \draw[<->,red] (excl) -- (node);
        \end{tikzpicture}
        \label{fig:all-rules:simple}
    }
    \subfloat[Far enough]{
        \begin{tikzpicture}[scale=.275]
            \draw[-] (0,0) -- (8.3,0);
            \draw[-] (0,0) -- (0,7.7);
            \node at (4.5,-.5) {$\text{maximise } x_1 \bm{\rightarrow}$};
            \node at (-.5,4) {\rotatebox{90}{$\bm{\leftarrow} \text{minimise } x_2$}};

            \filldraw[fill=blue!20!white,nearly transparent,draw=blue!80!white] (4.6,.5) rectangle +(2.8,5.6);
            \filldraw[fill=red!20!white,nearly transparent,draw=red!80!white] (.6,4.3) rectangle +(6,2.4);

            \node[blue] (node) at (6,3) [dot] {};
            \node[red] (excl) at (3.6,5.5) [dot] {};

            \node at (6.5,2.5) {\textcolor{blue}{$\est{}{\varsigma}$}};
            \node at (3.4,5.7) {\textcolor{red}{$\est{}{\sigma}$}};

            \draw[<->,red] (excl) -- (4.6,5.5) ;
            \draw[<->,red] (6,4.3) -- (node);
        \end{tikzpicture}
        \label{fig:all-rules:far-enough}
    }\\
    \subfloat[Far from excluded]{
        \begin{tikzpicture}[scale=.275]
            \draw[-] (0,0) -- (8.3,0);
            \draw[-] (0,0) -- (0,7.7);
            \node at (4.5,-.5) {$\text{maximise } x_1 \bm{\rightarrow}$};
            \node at (-.5,4) {\rotatebox{90}{$\bm{\leftarrow} \text{minimise } x_2$}};

            \filldraw[fill=blue!20!white,nearly transparent,draw=blue!80!white] (2.2,.4) rectangle +(5.4,5.8);
            \filldraw[fill=red!20!white,nearly transparent,draw=red!80!white] (1.3,4.4) rectangle +(2.6,2.8);

            \node[blue] (node) at (5,3.3) [dot] {};
            \node[red] (excl) at (2.6,5.8) [dot] {};

            \node at (5.5,2.8) {\textcolor{blue}{$\est{}{\varsigma}$}};
            \node at (2.2,6.2) {\textcolor{red}{$\est{}{\sigma}$}};

            \draw[<->,red] (3.9,4.4) -- (node);
        \end{tikzpicture}
        \label{fig:all-rules:far-from-excluded}
    }
    \subfloat[Far from witness]{
        \begin{tikzpicture}[scale=.275]
            \draw[-] (0,0) -- (8.3,0);
            \draw[-] (0,0) -- (0,7.7);
            \node at (4.5,-.5) {$\text{maximise } x_1 \bm{\rightarrow}$};
            \node at (-.5,4) {\rotatebox{90}{$\bm{\leftarrow} \text{minimise } x_2$}};

            \filldraw[fill=blue!20!white,nearly transparent,draw=blue!80!white] (4.5,0.9) rectangle +(2.6,2.4);
            \filldraw[fill=red!20!white,nearly transparent,draw=red!80!white] (.9,1.7) rectangle +(5.4,5.6);

            \node[blue] (node) at (5.8,2.1) [dot] {};
            \node[red] (excl) at (3.6,4.5) [dot] {};

            \node at (6.3,1.3) {\textcolor{blue}{$\est{}{\varsigma}$}};
            \node at (3.4,4.7) {\textcolor{red}{$\est{}{\sigma}$}};

            \draw[<->,red] (excl) -- (4.5,3.3);
        \end{tikzpicture}
        \label{fig:all-rules:far-from-witness}
    }\\
    \subfloat[Far from each other]{
        \begin{tikzpicture}[scale=.275]
            \draw[-] (0,0) -- (8.3,0);
            \draw[-] (0,0) -- (0,7.7);
            \node at (4.5,-.5) {$\text{maximise } x_1 \bm{\rightarrow}$};
            \node at (-.5,4) {\rotatebox{90}{$\bm{\leftarrow} \text{minimise } x_2$}};

            \filldraw[fill=blue!20!white,nearly transparent,draw=blue!80!white] (4.5,0.6) rectangle +(3,3.8);
            \filldraw[fill=red!20!white,nearly transparent,draw=red!80!white] (1.8,3.5) rectangle +(3.2,3.8);

            \node[blue] (node) at (6,2.5) [dot] {};
            \node[red] (excl) at (3.4,5.5) [dot] {};

            \node at (6.5,2) {\textcolor{blue}{$\est{}{\varsigma}$}};
            \node at (3.2,5.7) {\textcolor{red}{$\est{}{\sigma}$}};

            \draw[<->,red] (excl) -- (4.5,4.4);
            \draw[<->,red] (5,3.5) -- (node);
        \end{tikzpicture}
        \label{fig:all-rules:far-from-each-other}
    }
    \subfloat[Conservatively far]{
        \begin{tikzpicture}[scale=.275]
            \draw[-] (0,0) -- (8.3,0);
            \draw[-] (0,0) -- (0,7.7);
            \node at (4.5,-.5) {$\text{maximise } x_1 \bm{\rightarrow}$};
            \node at (-.5,4) {\rotatebox{90}{$\bm{\leftarrow} \text{minimise } x_2$}};

            \filldraw[fill=blue!20!white,nearly transparent,draw=blue!80!white] (5.2,0.8) rectangle +(2.6,2.2);
            \filldraw[fill=red!20!white,nearly transparent,draw=red!80!white] (1.3,4.3) rectangle +(2.6,2.4);

            \node[blue] (node) at (6.5,1.9) [dot] {};
            \node[red] (excl) at (2.6,5.5) [dot] {};

            \node at (7,1.4) {\textcolor{blue}{$\est{}{\varsigma}$}};
            \node at (2.4,5.7) {\textcolor{red}{$\est{}{\sigma}$}};

            \draw[<->,red] (5.2,3) -- (3.9,4.3);
        \end{tikzpicture}
        \label{fig:all-rules:conservatively-far}
    }
    \caption{Strategy discarding rules.}
    \vspace{-0.2cm}
    \label{fig:all-rules}
\end{wrapstuff}

The simple rule only compares sample means, not taking any statistical considerations into account. %
In \Cref{fig:rule:pareto-simple}, the confidence boxes for $\sigma$ and $\varsigma$ overlap, so with the confidence level used, we cannot exclude that $\sigma$ is in fact the better strategy.
A more ``precise'' rule may decide to perform a probabilistic analysis and reject $\sigma$ if it is ``far enough'' from the witness $\varsigma$, where ``far enough'' means that the inequalities of \Cref{def:rule-simple} hold with sufficient probability. %
To simplify the process, we only consider CIs to define the criteria for rejection, not the full sample distributions, using the precisions of the CIs to establish ``safe'' distances.

\Cref{fig:all-rules} sketches all criteria we consider based on the sample means (dots) and confidence boxes for $\sigma$ and witness~$\varsigma$;
the double-ended arrows indicate the exclusion rules.
Following the simple rule, all situations in \Cref{fig:all-rules} exclude $\sigma$ as a consequence of the presence of $\varsigma$.
The more precise rules are defined as follows for maximising objectives; the criterion when minimising is the same with $+$ and $-$ as well as $\leq$ and $\geq$ switched:

\begin{definition}
\hspace{-4.5ex} %
$\selcandidates_\mathit{rule}(\Sigma)$ does \underline{not} contain $\sigma$ iff there is a witness strategy $\varsigma\in\Sigma$ s.t. $\forall i \in \set{1, \ldots, d}$,
$$
\begin{aligned}
\est{i}{\sigma} &\leq \est{i}{\varsigma}-\min(\precision{i}{\varsigma},\precision{i}{\sigma})
&&\quad \textit{(FE)}\\
\est{i}{\sigma}+\precision{i}{\sigma} &\leq \est{i}{\varsigma}
&&\quad \textit{(FFE)}\\
\est{i}{\sigma} &\leq \est{i}{\varsigma}-\precision{i}{\varsigma}
&&\quad \textit{(FFW)}\\
\est{i}{\sigma} &\leq \est{i}{\varsigma}-\max(\precision{i}{\varsigma},\precision{i}{\sigma})
&&\quad \textit{(FFEO)}\\
\est{i}{\sigma}+\precision{i}{\sigma} &\leq \est{i}{\varsigma}-\precision{i}{\varsigma}
&&\quad \textit{(CF)}
\end{aligned}
$$
\end{definition}
\noindent
The rules involving $\varepsilon$ assume symmetric CIs, but extend analogously to asymmetric ones.
Rule ``far enough'' (FE, \Cref{fig:all-rules:far-enough}) ensures that, for every objective, at least one of the values estimated with one strategy is outside the other strategy's CIs.
Strategy $\varsigma$ cannot exclude $\sigma$ in \Cref{fig:all-rules:simple} by this rule since $\est{2}{\sigma}<\est{2}{\varsigma}+\min(\precision{2}{\varsigma},\precision{2}{\sigma})$, but it can do so in \Cref{fig:all-rules:far-enough} since $\est{1}{\sigma}\leq\est{1}{\varsigma}-\precision{1}{\varsigma}$ and $\est{2}{\sigma}\geq\est{2}{\varsigma}+\precision{2}{\sigma}$, and in all of \Cref{fig:all-rules:far-from-excluded,fig:all-rules:far-from-witness,fig:all-rules:far-from-each-other,fig:all-rules:conservatively-far}.

The next rules focus on the confidence of \emph{one} of the strategies: %
Rule ``far from excluded'' (FFE) requires the estimated values for witness $\varsigma$ to be outside the CIs for the excluded strategy $\sigma$,
shown in \Cref{fig:all-rules:far-from-excluded}.
Dual to this is ``far from witness'' (FFW), shown in \Cref{fig:all-rules:far-from-witness}.
These two rules are incomparable but implied by ``simple'' and ``far enough''.
The intersection of both rules is ``far from each other'' (FFEO), visualised in \Cref{fig:all-rules:far-from-each-other}. %
This rule also excludes $\sigma$ in the situation of \Cref{fig:all-rules:conservatively-far}, but will not exclude it in the situations of \Cref{fig:all-rules:simple,fig:all-rules:far-enough,fig:all-rules:far-from-excluded,fig:all-rules:far-from-witness}.
The final and most conservative rule is ``conservatively far'' (CF), which ensures that the CIs obtained with the excluded strategy $\sigma$ do not overlap with the CIs obtained with the witness $\varsigma$ as shown in \Cref{fig:all-rules:conservatively-far}.

\section{Experimental Evaluation}
\label{sec:Experiments}

We implemented the algorithms and heuristics of \Cref{sec:MOSMC} in \modes to experimentally study their effectiveness and performance. %
Our implementation follows the given pseudocode, except that in the fixed-budget algorithms, any budget unused due to rounding effects in one iteration is carried over to the next.

\subsection{Benchmark Set}

We compiled a representative benchmark set consisting of benchmarks from PMC and RL.
For the former, we use all multi-objective models from \cite[Sect.~6.2]{ABBC+24} that \modes supports (in terms of modelling features) and that have nontrivial Pareto fronts (\ie of more than one Pareto-optimal point, as computed via PMC).
These models are small enough to be checked with PMC.
To demonstrate that our approaches keep working when PMC starts to struggle with state space explosion and higher dimensions, we also consider models commonly used in RL literature~\cite{FANB+23,BBS95,Sut95,LTZCX24}:
(1)~the breakable bottles model~\cite{VFDB21} with 3 objectives,
(2)~the fruit tree model~\cite{YSN19} with both 2 and 6 objectives,
(3)~deterministic and probabilistic versions of the deep sea treasure case~\cite{VDBID11},~and
(4)~racetrack benchmarks~\cite{Gar73}, which we make multi-objective via the puddle world~\cite{BM94} reward scheme.
For details, see \Cref{ap:Models}.
Altogether, we end up with 15 models.
Most are parametrised; we consider up to three different parameter valuations per model and one multi-objective property, giving us a final set of 40 benchmark instances.
The benchmark selection was done prior to running any experiments, and was not based on any expectations of the performance of our approaches.

\subsection{Experimental Setup}

To be able to assess the impact of the number of initial strategies $m$ and the run count parameter $n$ on the fixed-budget algorithms, we use three different
\begin{wraptable}[6]{r}{3.7cm}
    \vspace{-1.04em}\centering
    \caption{Configurations}
    \vspace{4pt}
    \label{tab:config}
    \scriptsize
    \setlength{\tabcolsep}{2.5pt}
    \centering
    \begin{tabular}{cccccccccc}
        \toprule
        Conf & $m$ & $\alpha$ & $n$ & $I$ \\
        \midrule
        1 & $100$ & $0.1$ & $1000$ & $10$ \\
        2 & $1000$ & $0.1$ & $100$ & $10$ \\
        3 & $3333$ & $0.1$ & $30$ & $10$ \\
        \bottomrule
    \end{tabular}
    \vspace{-3em}
\end{wraptable}
configurations as listed in \Cref{tab:config}, and
evaluate each algorithm on each instance with all heuristics in all configurations, leading to $3 \cdot 40 \cdot 6 \cdot 3 = 2160$ different experiments.
WVR samples $\frac{m}{2}$, FIB $m$, and FSB at most $\approx m \cdot \ln I$ strategies.
We thus sample $<10000$, so \emph{much} fewer than $2^{32}$, strategies in each experiment.

We judge each setting by its result's \emph{hypervolume} (HV)~\cite{ZT98}:
the area under the Pareto front with respect to a ``worst possible'' reference point.
In \Cref{fig:ExampleCurveApproximation}, a reasonable reference point is the top left corner; then the HV of the underapproximation is the extent of the visible blue (top left) area.
The HV is the only known metric that is strictly monotonic~\cite{ZKT08}, meaning that if one approximation strictly contains another, its HV will be larger.
This makes it popular in multi-objective research~\cite{ZTJ15,LLC25,OTOS23,ZTLFF03}, but a disadvantage is that the reference point is arbitrary.
Thus the HV is useful for a qualitative comparison of results on the same model, but not across different models or for a quantitative evaluation.

To quantify the quality of an approximation w.r.t.\ a reference Pareto front/\allowbreak achievable set, we compute the multiplicative $\epsilon$~\cite{ZKT08} of the underapproximation.
This metric is the least factor by which each point in an approximation should be scalar-multiplied (or divided in case of a minimising objective) in order to be a superset of the reference.
A multiplicative $\epsilon$ of $1$ means that the approximation already fully contains the reference set and is thus optimal.
Infinite multiplicative $\epsilon$ values can arise when a minimising property needs to approximate the actual value $0$:
for a value $x \neq 0$, observe that only $\lim_{a \rightarrow \infty} \frac{x}{a} = 0$.
Since the multiplicative $\epsilon$ only considers the worst-performing point of the approximation, it can---and in our benchmarks does---happen that two approximations have equal multiplicative $\epsilon$ but one is a superset of the other.
Improvements in the rest of the approximation beyond the worst point can only be captured using the HV.

We run each experiment three times, each with a different strategy seed (of $1$, $2$, and $3$).
We report the average of the three HVs, and of the multiplicative $\epsilon$s if a reference front/set is known (either from insights into the model, or obtained via PMC).
We ran our experiments on a 64-bit Ubuntu Linux 22.04 system with AMD Ryzen~9 7950X3D CPU (4.2-5.7\,GHz) and 128\,GB~of~RAM.

\subsection{Comparison with PMC}

\begin{table}[t]
    \caption{Benchmarks instances and comparison to PMC.}
    \label{tab:benchmarks}
    \scriptsize
    \setlength{\tabcolsep}{2.5pt}
    \centering
    \begin{tabular}{>{\ttfamily}lcrrcc}
        \toprule
        {\normalfont benchmark instance} & objectives & states\phantom{**} & PMC time & our time & multiplicative $\epsilon$ \\
        \midrule
        care\_home.1-160 & 2 & 3.46\,M*\phantom{*} & ERR\phantom{**} & TO & -- \\
        energy\_aware\_scheduling.3-2 & 2 & 409\,k\phantom{**} & 7.6\,s\phantom{**} & [5.7\,s, 13\,s] & [6.84, 20.3] \\
        energy\_aware\_scheduling.3-3 & 2 & 1.86\,M\phantom{**} & 1.4\,min\phantom{**} & [5.7\,s, 13\,s] & [7.01, 19.9] \\
        energy\_aware\_scheduling.3-4 & 2 & 10.7\,M\phantom{**} & 15\,min\phantom{**} & [5.6\,s, 13\,s] & [7.30, 18.7] \\
        mars\_rover.10-1 & 2 & 151\,k\phantom{**} & 6.0\,s\phantom{**} & [1.6\,s, 11\,s] & [2.15, 3.55] \\
        mars\_rover.20-1 & 2 & 1.38\,M\phantom{**} & 1.3\,min\phantom{**} & [1.5\,s, 9.7\,s] & [15.4, 43.9] \\
        resource\_gathering.2-1-1-1 & 2 & 3.46\,k\phantom{**} & 0.1\,s*\phantom{*} & TO & -- \\
        \midrule
        breakable\_bottles.5-2 & 3 & 1.16\,k\phantom{**} & 0.1\,s\phantom{**} & [1.8\,s, 4.4\,s] & [1.29, 3.09] \\
        breakable\_bottles.10-100 & 3 & 36\,k\phantom{**} & 21\,s\phantom{**} & [7.3\,s, 15\,s] & [10.1, 32.0] \\
        breakable\_bottles.10-1000 & 3 & 40.6\,M\phantom{**} & TO\phantom{**} & [9.8\,s, 28\,s] & [43.5, 117] \\
        deep\_sea.5-100 & 2 & 2.40\,k\phantom{**} & 0.4\,s\phantom{**} & [1.7\,s, 3.5\,s] & [1.02, 1.34] \\
        deep\_sea.30-1000 & 2 & 532\,k\phantom{**} & 3.8\,min\phantom{**} & [2.4\,s, 22\,s] & [1.35, 1.71] \\
        deep\_sea.100-5000 & 2 & 26.1\,M\phantom{**} & TO\phantom{**} & [2.5\,s, 28\,s] & [1.71, 3.08] \\
        deep\_sea\_probabilistic.5-100 & 2 & 2.21\,k\phantom{**} & 0.5\,s\phantom{**} & [1.7\,s, 3.5\,s] & [1.00, 1.07] \\
        deep\_sea\_probabilistic.30-1000 & 2 & 1.04\,M\phantom{**} & 6.0\,min\phantom{**} & [2.7\,s, 12\,s] & [1.67, 2.26] \\
        deep\_sea\_probabilistic.100-5000 & 2 & 126\,M\phantom{**} & TO\phantom{**} & [2.7\,s, 12\,s] & -- \\
        fruit\_tree\_2.5 & 2 & 77\phantom{**} & 0.1\,s\phantom{**} & [1.2\,s, 17\,s] & [1.01, 1.07] \\
        fruit\_tree\_2.10 & 2 & 2.56\,k\phantom{**} & 1.0\,s\phantom{**} & [1.5\,s, 13\,s] & [1.01, 1.08] \\
        fruit\_tree\_2.30 & 2 & 2.7\,B** & TO\phantom{**} & [2.7\,s, 11\,s] & -- \\
        fruit\_tree\_6.7 & 6 & 317\phantom{**} & TO\phantom{**} & [10\,s, 8.4\,min] & [1.00, 1.42] \\
        fruit\_tree\_6.14 & 6 & 41.0\,k\phantom{**} & TO\phantom{**} & [5.4\,min, TO] & [1.31, 1.53] \\
        fruit\_tree\_6.30 & 6 & 2.7\,B** & TO\phantom{**} & [4.4\,min, TO] & -- \\
        puddle\_world\_barto\_big.30 & 2 & 411\,k*\phantom{*} & 10s*\phantom{*} & [4.2\,s, 7.9\,s] & [1.19, $\infty$) \\
        puddle\_world\_barto\_big.50 & 2 & 1.82\,M*\phantom{*} & 46s*\phantom{*} & [4.8\,s, 8.8\,s] & $\infty$ \\
        puddle\_world\_barto\_big.500 & 2 & 59.0\,M*\phantom{*} & TO*\phantom{*} & [6.0\,s, 20\,s]  & -- \\
        puddle\_world\_barto\_small.50 & 2 & 841\,k*\phantom{*} & 36\,s*\phantom{*} & [3.0\,s, 5.4\,s] & $\infty$ \\
        puddle\_world\_barto\_small.100 & 2 & 2.68\,M*\phantom{*} & 1.8\,min*\phantom{*} & [3.6\,s, 6.6\,s] & $\infty$ \\
        puddle\_world\_barto\_small.500 & 2 & 22.4\,M*\phantom{*} & 20\,min*\phantom{*} & [4.9\,s, 11\,s] & $\infty$ \\
        puddle\_world\_hansen.80 & 2 & 9.15\,M*\phantom{*} & 7.8\,min*\phantom{*} & [46\,s, 53\,s] & $\infty$ \\
        puddle\_world\_hansen.500 & 2 & 179\,M*\phantom{*} & TO*\phantom{*} & [52\,s, 67\,s] & -- \\
        puddle\_world\_hansen.1000 & 2 & 720\,M** & TO*\phantom{*} & [56\,s, 77\,s] & -- \\
        puddle\_world\_ring.60 & 2 & 5.41\,M*\phantom{*} & 5.0\,min*\phantom{*} & [7.0\,s, 12\,s] & $\infty$ \\
        puddle\_world\_ring.150 & 2 & 28.3\,M*\phantom{*} & 52\,min*\phantom{*} & [8.2\,s, 16\,s] & $\infty$ \\
        puddle\_world\_ring.500 & 2 & 128\,M*\phantom{*} & TO*\phantom{*} & [9.2\,s, 24\,s] & -- \\
        puddle\_world\_river.30 & 2 & 674\,k*\phantom{*} & 27\,s*\phantom{*} & [44\,s, 49\,s] & $\infty$ \\
        puddle\_world\_river.50  & 2 & 3.39\,M*\phantom{*} & 2.8\,min*\phantom{*} & [45\,s, 50\,s] & $\infty$ \\
        puddle\_world\_river.500 & 2 & 177\,M*\phantom{*} & TO*\phantom{*} & [55\,s, 69\,s] & -- \\
        puddle\_world\_tiny.5 & 2 & 1.27\,k*\phantom{*} & 0.4\,s*\phantom{*} & [1.2\,s, 4.4\,s] & [1.00, $\infty$) \\
        puddle\_world\_tiny.10 & 2 & 4.55\,k*\phantom{*} & 0.6\,s*\phantom{*} & [1.2\,s, 3.4\,s] & [1.00, $\infty$) \\
        puddle\_world\_tiny.100 & 2 & 131\,k*\phantom{*} & 2.3\,s*\phantom{*} & [1.3\,s, 7.1\,s] & [2.26, $\infty$) \\
        \bottomrule
    \end{tabular}
\end{table}

\Cref{tab:benchmarks} first lists the number of objectives in the property and the number of states of the MDP for each benchmark instance.
The latter has been computed via PMC (using \tool{Storm} in exact mode where possible, otherwise approximated via \mcsta's~\cite{HH15} new sound multi-objective PMC implementation~\cite{HQW26} and marked~*) or, if that failed, estimated by the authors from model-specific insights (marked~**).
It then compares the Pareto fronts---obtained via PMC (assumed to be correct and taken as the reference) or manually from model insights---with those approximated by our approaches, and the corresponding runtimes:
``PMC time'' is the runtime of \tool{Storm} or \mcsta, or a timeout ``TO'' ($>1$~hour). %
Column ``our time'' gives the interval from lowest to highest runtime of our approaches over all $3 \cdot 6 \cdot 3$ experiments performed per instance ($\text{TO} > 10$ minutes), while the last column contains analogous intervals over the multiplicative~$\epsilon$s.

We see that the PMC tools run out time %
on the larger RL benchmarks, while all our fixed-budget experiments succeed, often in seconds.
The SMC-based Pareto front underapproximations can get fairly close to the true Pareto fronts in this time, as observed for the \texttt{deep\_sea} and \texttt{fruit\_tree} models, but may also be quite far off as observed for the \texttt{breakable\_bottles} model.

\subsection{Fixed-Budget Algorithms Comparison}

We compare the fixed-budget algorithms and heuristics in terms of HV.
Each experiment performs a similar number of \emph{simulation runs}, but \emph{wall-clock runtimes}
may differ
due to per-strategy overhead in \modes
and because
some strategies induce longer runs.

\begin{wraptable}[10]{r}{6.25cm}
\vspace{-1em}
    \centering
    \caption{Config.-heuristic-alg.\ wins over all.}
    \vspace{8pt}
    \label{tab:results:overall}
    \scriptsize
    \setlength{\tabcolsep}{2pt}
    \begin{tabular}{cccccccccc}
        \toprule
         & \multicolumn{3}{c}{FIB} & \multicolumn{3}{c}{FSB} & \multicolumn{3}{c}{WVR}  \\
        \cmidrule(lr){2-4} \cmidrule(lr){5-7} \cmidrule(lr){8-10}
        Heuristic & C1 & C2 & C3 & C1 & C2 & C3 & C1 & C2 & C3 \\
        \midrule
        Simple & 0 & 1 & 1 & 2 & 2 & 5 & 0 & 0 & 4 \\
        FE & 0 & 1 & 0 & 0 & 0 & 8 & 0 & 0 & 0 \\
        FFE & 0 & 0 & 0 & 0 & 0 & 0 & 0 & 0 & 0 \\
        FFW & 0 & 1 & 0 & 0 & 0 & \textbf{9} & 0 & 0 & 0 \\
        FFEO & 1 & 0 & 0 & 0 & 0 & 0 & 0 & 0 & 0 \\
        CF & 0 & 0 & 0 & 0 & 0 & 2 & 0 & 0 & 0 \\
        \cmidrule(lr){1-1} \cmidrule(lr){2-4} \cmidrule(lr){5-7} \cmidrule(lr){8-10}
        Total & & 5 & & & \textbf{28} & & & 4 & \\
        \bottomrule
    \end{tabular}
\vspace{-3em}
\end{wraptable}

\paragraph{Best algorithm, heuristic, and configuration.}
No one combination of configuration, heuristic, and algorithm always wins, as shown in \Cref{tab:results:overall}, where we list the number of models on which each combination outperformed all other combinations (\ie delivered the highest HV, over the whole table).
FSB performs best overall. %
The total count is $37$ instead of $40$ since the experiments timed out on \texttt{care\_home} and \texttt{resource\_gathering}, while there is no unique best combination for the \texttt{deep\_sea\_probabilistic.5-100} instance.

\begin{table}[t]
    \parbox[t]{.33\linewidth}{
        \centering
        \setlength{\tabcolsep}{4pt}
        \caption{Heuristics.}
        \label{tab:results:heuristics}
        \vspace{-3pt}
        \scriptsize
        \begin{tabular}{cccc}
            \toprule
            Heuristic & FIB & FSB & WVR \\
            \midrule
            Simple & 6 & \textbf{16} & \textbf{26} \\
            FE & \textbf{9} & 8 & 1 \\
            FFE & 5 & 1 & 1 \\
            FFW & 4 & 9 & 3 \\
            FFEO & 8 & 1 & 1 \\
            CF & 3 & 2 & 0 \\
            \bottomrule
        \end{tabular}
    }%
    \hfill%
    \parbox[t]{.67\linewidth}{
        \centering
        \setlength{\tabcolsep}{4pt}
        \caption{Config.\ wins per heuristic-alg.\ combination.}
        \label{tab:results:config}
        \vspace{-5pt}
        \scriptsize
        \begin{tabular}{cccccccccc}
            \toprule
             & \multicolumn{3}{c}{FIB} & \multicolumn{3}{c}{FSB} & \multicolumn{3}{c}{WVR}  \\
            \cmidrule(lr){2-4} \cmidrule(lr){5-7} \cmidrule(lr){8-10}
            Heuristic & C1 & C2 & C3 & C1 & C2 & C3 & C1 & C2 & C3 \\
            \midrule
            Simple & 5 & 9 & \textbf{24} & 6 & 11 & \textbf{21} & 3 & 2 & \textbf{31} \\
            FE & 7 & 7 & \textbf{24} & 9 & 4 & \textbf{25} & 5 & 5 & \textbf{26} \\
            FFE & 7 & 5 & \textbf{26} & 9 & 6 & \textbf{23} & 6 & 8 & \textbf{22} \\
            FFW & 7 & 6 & \textbf{25} & 10 & 3 & \textbf{25} & 5 & 4 & \textbf{27} \\
            FFEO & 7 & 6 & \textbf{25} & 11 & 5 & \textbf{22} & 5 & 9 & \textbf{22} \\
            CF & \textbf{21} & 5 & 12 & 14 & \textbf{15} & 9 & 8 & 7 & \textbf{21} \\
            \bottomrule
        \end{tabular}
    }
\end{table}

The heuristics significantly influence the number of strategies surviving each iteration.
To evaluate their effect on each of the fixed-budget algorithms, we show in \Cref{tab:results:heuristics} the number of models for which each heuristic outperformed all others per algorithm (\ie within a column).
The simple strategy obtains the highest HV for most algorithms.
It thus appears that it is often beneficial (in our benchmark set) to discard strategies that do not look promising as early as possible and distribute the remaining budget over other strategies,
although as observed in \Cref{tab:results:overall}, other heuristics might lead to better overall results.

\Cref{tab:results:config} shows the number of models for which each configuration from~\Cref{tab:config} performs best for each heuristic-algorithm pair (\ie in a 3-column block within a row).
On these benchmarks, configuration $3$ performs best in almost every case, with the exception of the conservatively-far heuristic.
This heuristic needs the intervals to be smaller to discard suboptimal strategies and thus requires more simulation runs.
For the other approaches, it appears that having more, and thus presumably a wider variety of, strategies is more important than evaluating each strategy more precisely.

\begin{figure}[tb]
\centering
\scriptsize\hfill
\subfloat[Deep sea treasure]{
    \begin{tikzpicture}[scale=0.8]
        \begin{axis}[
            width=7.1cm,
            height=6.25cm,
            name=Axis,
            xlabel={Iteration},
            ylabel={Strategies in iteration},
            legend pos=north east,
            legend cell align=left,
            ymin=0,
            xmin=0,
            xticklabel style={/pgf/number format/fixed}
        ]
            \addplot[thick,blue,dotted,mark=none] table[x=iteration,y=deep_sea-P0-C0-Modest-0-S1] {data/FixedIterationBudget-Simple.dat};
            \addplot[thick,red,dashed,mark=none] table[x=iteration,y=deep_sea-P0-C0-Modest-0-S1] {data/FixedStrategyBudget-Simple.dat};
            \addplot[thick,orange,loosely dashdotted,mark=none] (0, 100) -- (10, 100) -- (10,4);
            \legend{Fixed iteration budget, Fixed strategy budget, Weight vector refinement}
        \end{axis}
    \end{tikzpicture}
    \label{fig:strategies:deep-sea}
}\hfill%
\subfloat[Fruit tree model]{
    \begin{tikzpicture}[scale=0.8]
        \begin{axis}[
            width=7.1cm,
            height=6.25cm,
            name=Axis,
            xlabel={Iteration},
            ylabel={Strategies in iteration},
            legend pos=south west,
            legend cell align=left,
            ymin=0,
            xmin=0,
            xticklabel style={/pgf/number format/fixed}
        ]
            \addplot[thick,blue,dotted,mark=none] table[x=iteration,y=exponential-P0-C0-Modest-0-S1] {data/FixedIterationBudget-Simple.dat};
            \addplot[thick,red,dashed,mark=none] table[x=iteration,y=exponential-P0-C0-Modest-0-S1] {data/FixedStrategyBudget-Simple.dat};
            \addplot[thick,orange,loosely dashdotted,mark=none] (0, 100) -- (10, 100) -- (10,58);
            \legend{Fixed iteration budget, Fixed strategy budget, Weight vector refinement}
        \end{axis}
    \end{tikzpicture}
    \label{fig:strategies:exponential}
}\hfill${}$
\caption{Strategies per iteration}
\label{fig:strategies}
\end{figure}

\paragraph{Strategies per iteration.}
The number of strategies remaining in each iteration naturally differs between the algorithms as illustrated for a \texttt{deep\_sea} and a \texttt{fruit\_tree} instance using the simple heuristic in \Cref{fig:strategies}.
In the \texttt{deep\_sea} model, most sampled strategies are suboptimal, whereas in the \texttt{fruit\_tree} model, most strategies are close to Pareto-optimal.
When most strategies are determined to be suboptimal, the FIB algorithm instantly discards most strategies.
FSB does the same, but uses the remaining budget to sample new strategies.
WVR, in contrast, only drops strategies after the last iteration.

\paragraph{Strategy seeds.}
As stated previously, each experiment was run with three different strategy seeds.
The reason for doing so can be seen in the underapproximations for the three different seeds for one of the racetrack-puddle world models in \Cref{fig:seed}:
Even though the model, algorithm, heuristic, and configuration are the same, there is still a significant difference in the computed underapproximations.

\subsection{The Incremental Scheme}

The effect of the incremental scheme's parameters can be seen without running experiments, as in \Cref{fig:approximation-settings:middle}.
Nevertheless, we experimentally inspect the underapproximations it finds, which we show for the deep sea treasure model and different timeouts in \Cref{fig:incremental:deep-sea}.
As expected, better strategies are sampled over time, and the approximation closes in on the true Pareto front.

\paragraphskip\noindent
Many more plots from our experiments, in particular visualisations like \Cref{fig:ParetoFrontExtendedExample} of the Pareto fronts found via our SMC-based approaches, are in \Cref{ap:AdditionalResults}.

\begin{figure}[b!]
\begin{minipage}[t]{0.33\textwidth}
    \centering
        \begin{tikzpicture}[scale=0.55]
            \begin{axis}[
                width=7.1cm,
                height=6.25cm,
                name=Axis,
                xlabel={$\bm{\leftarrow}$ minimise fuel consumed},
                ylabel={$\bm{\leftarrow}$ minimise puddle punishment},
                legend pos=north east,
                legend cell align=left,
                ymin=0,
                xmin=0.15,
                xmax=0.25,
                ymax=0.55,
                xticklabel style={/pgf/number format/fixed}
            ]
                \addplot[thick,blue!40!white] (0, 0) -- (0,0);
                \addplot[thick,red!40!white] (0, 0) -- (0,0);
                \addplot[thick,orange!40!white] (0, 0) -- (0,0);
                \legend{Seed 1, Seed 2, Seed 3}

                \addplot[blue!40!white, fill=blue!40!white,nearly transparent] (0.1660485752,0.55) --(0.1660485752,0.248044746119273) -- (0.16651497413983737,0.0725878707118558) -- (0.2135, 0.043) -- (0.1972857863167567,0.09646565385585537) -- (0.18966542609337694,0.1472948408695066) -- (0.1879609554,0.3865726815252643) -- (0.1879609554,0.55) -- cycle;
                \addplot[red!40!white, fill=red!40!white,nearly transparent] table[x=x-lower,y=y-lower, col sep=comma] {data/puddle_world_barto_small-P0-C1-Modest-FixedIterationBudget-Simple-1-S2.dat} -- (0.25, 0.0254297983) -- (0.25, 0.55) -- (0.1945592353, 0.55);
                \addplot[orange!40!white, fill=orange!40!white,nearly transparent] table[x=x-lower,y=y-lower, col sep=comma] {data/puddle_world_barto_small-P0-C1-Modest-FixedIterationBudget-Simple-1-S3.dat} -- (0.25, 0.0354577552) -- (0.25, 0.55) -- (0.1879609554, 0.55);

                \addplot[thick,blue!80!white,nearly transparent] (0.1660485752,0.55) --(0.1660485752,0.248044746119273) -- (0.16651497413983737,0.0725878707118558) -- (0.23351039662782314,0.0319690939) -- (0.25,0.0319690939);
                \addplot[thick,red!80!white,nearly transparent] (0.1945592353, 0.55) -- (0.1945592353, 0.49220873244461916) -- (0.2012569025514852, 0.0254297983) -- (0.25, 0.0254297983);
                \addplot[thick,orange!80!white,nearly transparent] (0.1879609554,0.55) --(0.1879609554,0.3865726815252643) -- (0.18966542609337694,0.1472948408695066) -- (0.1972857863167567,0.09646565385585537) -- (0.21604894455164045,0.0354577552) -- (0.25,0.0354577552);
            \end{axis}
        \end{tikzpicture}\vspace{-3pt}
        \caption{Strategy seeds.}
        \label{fig:seed}
\end{minipage}%
\begin{minipage}[t]{0.33\textwidth}
    \centering
        \begin{tikzpicture}[scale=0.55]
            \begin{axis}[
                width=7.1cm,
                height=6.25cm,
                name=Axis,
                xlabel={$\bm{\leftarrow}$ minimise fuel consumed},
                ylabel={maximise treasure found $\bm{\rightarrow}$},
                legend pos=south east,
                legend cell align=left,
                ymin=0,
                xmin=0,
                xmax=600,
                xticklabel style={/pgf/number format/fixed}
            ]
                \addplot[thick,blue!40!white] (0, 0) -- (0,0);
                \addplot[thick,red!40!white] (0, 0) -- (0,0);
                \addplot[thick,orange!40!white] (0, 0) -- (0,0);
                \addplot[thick] (0, 0) -- (0,0);
                \legend{10 seconds, 30 second, 60 seconds, Pareto front}

                \addplot[orange!40!white,fill=orange!40!white,nearly transparent] table[x=x-lower,y=y-lower, col sep=comma] {data/deep_sea-P0-C1-Modest-ParetoApproximation-2-S1.dat} -- (600, 7681) -- (600, 7280) -- (536, 7280) -- (281, 6633) -- (157, 5916) -- (81, 5000) -- (49, 4582) -- (29, 4000) (12, 3316) -- (6, 2449) -- (3, 1732) -- (1, 1000) -- cycle;
                \addplot[thick,orange!80!white,nearly transparent] table[x=x-lower,y=y-lower, col sep=comma] {data/deep_sea-P0-C1-Modest-ParetoApproximation-2-S1.dat} -- (600, 7681);

                \addplot[red!40!white,fill=red!40!white,nearly transparent] table[x=x-lower,y=y-lower, col sep=comma] {data/deep_sea-P0-C1-Modest-ParetoApproximation-1-S1.dat} -- (600, 7280) -- (600, 6633) -- (281, 6633) -- (157, 5916) -- (81, 5000) -- (66, 4690) -- (28, 3605) -- (20, 3316) -- (6, 2449) -- (3, 1732) -- (1, 1000) -- cycle;
                \addplot[thick,red!80!white,nearly transparent] table[x=x-lower,y=y-lower, col sep=comma] {data/deep_sea-P0-C1-Modest-ParetoApproximation-1-S1.dat} -- (600, 7280);

                \addplot[blue!80!white, fill=blue!40!white,nearly transparent] table[x=x-lower,y=y-lower, col sep=comma] {data/deep_sea-P0-C1-Modest-ParetoApproximation-0-S1.dat} -- (600,6633) -- (600, 0) -- (0, 0) -- cycle;
                \addplot[thick,black] table[x=x-correct,y=y-correct, col sep=comma] {data/deep_sea-P0-C1-Modest-ParetoApproximation-2-S1.dat};
            \end{axis}
        \end{tikzpicture}\vspace{-3pt}
        \caption{Incremental scheme.}
        \label{fig:incremental:deep-sea}
\end{minipage}%
\begin{minipage}[t]{0.33\textwidth}
\centering
\begin{tikzpicture}[scale=0.55]
    \begin{axis}[
        width=7.1cm,
        height=6.25cm,
        name=Axis,
        xlabel={$\bm{\leftarrow}$ minimise fuel},
        ylabel={maximise treasure $\bm{\rightarrow}$},
        xticklabel style={/pgf/number format/fixed}
    ]
        \addplot[thick, blue!80!white, fill=blue!40!white,nearly transparent] table[x=x_0,y=x_1, col sep=comma, discard if not={type}{under}] {data/appendix/deep_sea-P0-C1-Modest-FixedStrategyBudget-Simple-2-S1.dat} -- (\pgfkeysvalueof{/pgfplots/xmax}, 7548.999033856793) -- (\pgfkeysvalueof{/pgfplots/xmax}, \pgfkeysvalueof{/pgfplots/ymin}) -- (1.0000320148178616, \pgfkeysvalueof{/pgfplots/ymin}) -- cycle;
        \addplot[thick, red!80!white, fill=red!40!white,nearly transparent] table[x=x_0,y=x_1, col sep=comma, discard if not={type}{under}] {data/appendix/deep_sea-P0-C1-Modest-FixedStrategyBudget-Simple-2-S2.dat} -- (\pgfkeysvalueof{/pgfplots/xmax}, 7810.001275285751) -- (\pgfkeysvalueof{/pgfplots/xmax}, \pgfkeysvalueof{/pgfplots/ymin}) -- (1.0000320148178616, \pgfkeysvalueof{/pgfplots/ymin}) -- cycle;
        \addplot[thick, orange!80!white, fill=orange!40!white,nearly transparent] table[x=x_0,y=x_1, col sep=comma, discard if not={type}{under}] {data/appendix/deep_sea-P0-C1-Modest-FixedStrategyBudget-Simple-2-S3.dat} -- (\pgfkeysvalueof{/pgfplots/xmax}, 7810.0002926848465) -- (\pgfkeysvalueof{/pgfplots/xmax}, \pgfkeysvalueof{/pgfplots/ymin}) -- (1.0000320148178616, \pgfkeysvalueof{/pgfplots/ymin}) -- cycle;
        \addplot[thick] table[x=x_0,y=x_1, col sep=comma, discard if not={type}{ref}] {data/appendix/deep_sea-P0-C1-Modest-FixedStrategyBudget-Simple-2-S1.dat};
    \end{axis}
\end{tikzpicture}\vspace{-3pt}
\caption{\texttt{deep\_sea.10-100}}
\label{fig:ParetoFrontExtendedExample}
\end{minipage}
\end{figure}

\section{Conclusion}
\label{sec:Conclusion}

\noindent
We introduced the first multi-objective SMC approach that
(i)~addresses the general case of $d$ equal-priority probabilistic objectives and
(ii)~is constant-memory \wrt the model's state space size.
Our algorithms approximate the achievable set and Pareto front.
They provide statistical guarantees on the under\-approximation, and the incremental approach also comes with a limit guarantee for the overapproximation.
Our experiments indicate that, in finite time, the fixed strategy budget algorithm with the simple heuristic, a high number of initial strategies, and a small number of runs for each tends to perform best.

Our approach directly applies to models beyond MDP for which single-objective LSS is available.
It also extends to infinite-state models in principle, as long as runs almost surely terminate; however, the overapproximation in the incremental approach would no longer converge because we could no longer assume every strategy to eventually be sampled given fixed-size strategy identifiers.

\paragraph{Data availability.}
The exercised versions of \modes and \tool{Storm}, benchmarks, produced results, and a docker environment are available on \tool{Zenodo}~\cite{AHWW26}.

\bibliographystyle{splncs04}
\bibliography{paper}

\clearpage
\renewcommand{\theHsection}{A\arabic{section}}
\appendix

\section{Additional Explanations}
\label{ap:Explanations}

To explain the need for generating new runs in the evaluation phase of our fixed-budget algorithms (as described on page~\pageref{ref:ExplanationAvoidBias}), let us make an analogy:
Consider the experiment of tossing $100$ fair coins $n$ times, and let us repeat it $1000000$ times.
If we (i)~discard all experiments but the one with the highest rate of getting heads, and then (ii)~use the already-collected data for that experiment to get a $95\,\%$ CI based on $n$ runs for that coin's probability of showing heads, we are much more likely than $95\,\%$ to falsely conclude that the coin is biased.
We thus need to collect new samples for (ii), or apply Bonferroni correction to divide the error budget over all $1000000$ repetitions to achieve simultaneous $95\,\%$ confidence.
The former needs $n$ extra runs, while (using the Okamoto bound~\cite{Oka59}) the latter now needs $10^6\cdot n$ in each repetition, so just below $10^{11}\cdot n$ extra runs in~total.

\section{Additional Figures and Plots}

\begin{figure}[b]
    \centering
    \hfill
    \subfloat[Up to $3\cdot10^9$ runs]{
        \begin{tikzpicture}[scale=0.6]
            \begin{axis}[
                name=Axis,
                xlabel={Sampled runs},
                ylabel={Sampled strategies},
                legend pos=north west,
                legend cell align=left,
                ymin=0,
                xmin=0,
                xmax=3000000000,
                xticklabel style={/pgf/number format/fixed}
            ]
                \addplot[thick,blue,dotted,mark=none] table[x=a1,y=a2] {data/approximation-settings.dat};
                \addplot[thick,red,dashed,mark=none] table[x=b1,y=b2] {data/approximation-settings.dat};
                \addplot[thick,orange,loosely dashdotted,mark=none] table[x=c1,y=c2] {data/approximation-settings.dat};
                \addplot[mark=none] table[x=d1,y=d2] {data/approximation-settings.dat};
                \legend{$f=0.1;m=100$, $f=0.1;m=1000$, $f=0.5;m=100$, $f=0.5;m=1000$}
                \addplot[black,dashed,mark=none] coordinates {(0, 1000) (300000000, 1000) (300000000, 0)};
            \end{axis}
        \end{tikzpicture}
        \label{fig:approximation-settings:large}
    }
    \hfill
    \subfloat[Zoom in to up to $3\cdot10^8$ runs]{
        \begin{tikzpicture}[scale=0.6]
            \begin{axis}[
                name=Axis,
                xlabel={Sampled runs},
                ylabel={Sampled strategies},
                legend pos=north west,
                legend cell align=left,
                ymin=0,
                xmin=0,
                xmax=300000000,
                xticklabel style={/pgf/number format/fixed}
            ]
                \addplot[thick,blue,dotted,mark=none,restrict x to domain=0:300000000] table[x=a1,y=a2] {data/approximation-settings.dat};
                \addplot[thick,red,dashed,mark=none,restrict x to domain=0:300000000] table[x=b1,y=b2] {data/approximation-settings.dat};
                \addplot[thick,orange,loosely dashdotted,mark=none,restrict x to domain=0:300000000] table[x=c1,y=c2] {data/approximation-settings.dat};
                \addplot[mark=none,restrict x to domain=0:300000000] table[x=d1,y=d2] {data/approximation-settings.dat};
                \legend{$f=0.1;m=100$, $f=0.1;m=1000$, $f=0.5;m=100$, $f=0.5;m=1000$}
            \end{axis}
        \end{tikzpicture}
        \label{fig:approximation-settings:small}
    }
    \hfill${}$

    \caption{Sampled strategies per sampled runs in \Cref{alg:incremental} for $d = 2, \varepsilon = 0.01, \alpha = 0.1$}
    \label{fig:approximation-settings}
\end{figure}

\Cref{fig:approximation-settings} provides two additional views on the tradeoffs in \Cref{alg:incremental} (the incremental scheme) that are further zoomed out (\Cref{fig:approximation-settings:large}) and further zoomed in (\Cref{fig:approximation-settings:small}) compared to \Cref{fig:approximation-settings:middle} in \Cref{sec:incremental}.

\section{Details on Benchmark Models}
\label{ap:Models}

We provide more details on the families of benchmark models that we added over the ones collected from QComp 2023 to demonstrate that our algorithms can analyse models that PMC struggles with.

\paragraph{Breakable bottles.}
The breakable bottles model~\cite{VFDB21} represents a person that wants to move bottles from position A to position B.
However, if they carry more than one bottle at a time, with each movement there is a 10\,\% chance of dropping, and thereby breaking, one bottle.
At some point, the person will get exhausted from all of this exercise and has to stop.
There are three rewards, representing (1.) the number of broken bottles, (2.) the number of successfully transferred bottles, and (3.) the number of steps taken.

\paragraph{Deep sea treasure.}
The deep sea treasure model~\cite{VDBID11} is commonly used in RL~\cite{FANB+23,WWD14,CWG20,HLIS+19,WS13}.
In this model, a deep sea submarine searching for treasure is controlled.
Larger treasures are located deeper underwater, which introduces trade-off with fuel consumption.
We use two variants of this model: a deterministic and a probabilistic version.
The original deep sea treasure model is deterministic, which allows for a straightforward computation of the correct Pareto front~\cite{VDBID11}.
To make it more interesting, we use a larger environment than the original model, which makes it challenging for PMC.
The probabilistic version introduces a probabilistic chance of the submarine imploding when it submerges too quickly, thus requiring greater patience.

\paragraph{Fruit tree model.}
The fruit tree model~\cite{YSN19} is a perfect binary tree with each leaf handing out rewards corresponding to nutrients of the fruit on that leaf.
To obtain non-trivial Pareto fronts, the reward vector with 2 objectives is chosen such that each leaf corresponds to a discrete point in the positive quadrant of the quadratic equation $1 - x^2$.
For 6 objectives, each leaf corresponds to a discrete point on the all-positive section of a 5-sphere.
Since each state corresponds to two new states in the next depth, the state space grows exponentially.
Since the state space explodes rapidly, these models are computationally infeasible for PMC.
However, the length of the paths scales linearly with depth, making these models suitable for SMC.

\paragraph{Puddle world racetracks.}
We finally include racetrack benchmarks~\cite{Gar73}, which are also used in AI literature~\cite{BBS95,WKD10,ADLR22,FGR22} and have been adopted by the PMC community as well~\cite{BCGG+20,GHHKS23,CEHH+21}.
In these models, the driver must select a direction in which they would like to accelerate at every timestamp.
The car then has a 90\,\% chance of accelerating in that direction.
The driver aims to optimise the probability of reaching the finish line without leaving the track.
To make them multi-objective, we use the classic puddle world~\cite{BM94} reward scheme, which is widely used in RL literature~\cite{Sut95,NNVN+20,JKP21,IGIT03,LTZCX24}.
In the puddle world, the agent aims to finish as soon as possible but would like to avoid the puddles, which are the off-track areas of the map.
Since the path to the finish is shorter when driving off-track, there is a trade-off between the puddle punishment and the time to reach the finish line.

\section{Additional Experimental Results}
\label{ap:AdditionalResults}

We provide some more plots to illustrate our findings.
For each model, we provide a plot analogous to \Cref{fig:seed} of the three seeds, using FSB, the simple heuristic, and configuration 3 since we found in \Cref{sec:Experiments} that those performed best overall.
The results are shown in \Cref{app:fig:deep_sea,app:fig:deep_sea_probabilistic,app:fig:energy_aware_scheduling,app:fig:fruit_tree_2,app:fig:mars_rover,app:fig:puddle_world_barto_big,app:fig:puddle_world_barto_small,app:fig:puddle_world_hansen,app:fig:puddle_world_ring,app:fig:puddle_world_river,app:fig:puddle_world_tiny}; the true Pareto front is shown in black in case it is known.

The true Pareto fronts are known by construction for the fruit-tree, deep-sea (both variants) and breakable-bottles models.
For the large fruit-tree there are too many Pareto-optimal points to feasibly compute the multiplicative epsilon.

In the breakable-bottles model, a Pareto-optimal strategy is to only walk with one bottle from A to B and back.
The probability of finding this strategy, or a very close one, is low.
Hence, this model serves as an example of when our approach is not suitable (but we did not select it for this purpose a priori).

\begin{figure}[tbp]
    \centering
	\subfloat[\texttt{.5-100}]{

        \caption{Algorithms}\label{fig:app-algs}
    \end{minipage}%
\end{figure}

\end{document}